\definecolor{pdfbgcolor}{RGB}{180,180,180}
\definecolor{trevcolor}{RGB}{0,80,200}
\definecolor{dkgreen}{rgb}{0,0.6,0}
\definecolor{mauve}{rgb}{0.58,0,0.82}
\definecolor{nblue}{RGB}{46, 134, 193 }
\definecolor{ngr}{RGB}{142, 68, 173}
\definecolor{col1}{HTML}{7D3C98}
\definecolor{col2}{HTML}{bd910f}
\definecolor{algo1color}{RGB}{0, 128, 0}
\definecolor{algo2color}{RGB}{255, 0, 0}
\lstdefinestyle{customc}{
    belowcaptionskip = 1\baselineskip,
    breaklines       = true,
    frame            = L,
    xleftmargin      = \parindent,
    language         = C,
    showstringspaces = false,
    escapeinside     = {//}{\^^M},
    basicstyle       = \LSTfont, 
    keywordstyle     = \bfseries\color{green!40!black},
    commentstyle     = \itshape\color{gray!60!black},
    identifierstyle  = \color{blue!50!black},
    stringstyle      = \color{orange},
    otherkeywords    = {then,word,process_local,type,xbegin,xabort,xend},
    numbersep        = 5pt, 
    numberstyle      = \tiny\color{black}, 
    numbers          = left, 
}
\newif\ifcode
\newcommand{\pscomment}[1]{\textcolor{blue!50}{PS: #1}}
\newcommand{\rwstm}{RWSTM\xspace}
\newcommand{\ostm}{OSTM\xspace}
\newcommand{\preset}{preset}
\newcommand{\Preset}{Preset}
\newcommand{\presetSerialization}{preset serialization}
\newcommand{\presetSerializable}{preset serializable}
\newcommand{\rwOblivious}{read-write oblivious}
\newcommand{\RWOblivious}{Read-write Oblivious}
\newcommand{\rwAware}{read-write aware}
\newcommand{\RWAware}{Read-write Aware}
\newcommand{\btmemory}{block transactional memory}
\newcommand{\BTMemory}{Block Transactional Memory}
\newcommand{\BTM}{BTM}
\newcommand{\STM}{STM}
\newcommand{\stmemory}{software transactional memory}
\newcommand{\Blockstm}{Block-STM}
\newcommand{\BlockSTM}{Block-STM}
\newcommand{\BSTM}{Block-STM}
\newcommand{\PEVM}{PEVM}
\newcommand{\SHySTM}{oBTM}     
\newcommand{\saBlockSTM}{oBTM} 
\newcommand{\oBTM}{oBTM}       
\newcommand{\osaBlockSTM}{iBTM}
\newcommand{\iBTM}{iBTM}       
\newcommand{\SSTM}{BTM}        
\newcommand{\sdag}{dBTM}       
\newcommand{\dBTM}{dBTM}       
\newcommand{\saSupraSTM}{dBTM} 
\newcommand{\dApp}{dApp}
\newcommand{\Adaptive}{Adaptive}
\newcommand{\ERC}{ERC-20}
\newcommand{\Cset}{cSet}
\newcommand{\cSet}{cSet}
\newcommand{\account}{state}
\newcommand{\vmState}{VM state}
\newcommand{\state}{state}
\newcommand{\RNum}[1]{\uppercase\expandafter{\romannumeral #1\relax}}
\newcommand{\true}{\mathit{true}}
\newcommand{\false}{\mathit{false}}
\newcommand{\remove}[1]{}
\newcommand{\Wset}{\textit{Wset}}
\newcommand{\Rset}{\textit{Rset}}
\newcommand{\Dset}{\textit{Dset}}
\newcommand{\Read}{\textit{read}}
\newcommand{\Write}{\textit{write}}
\newcommand{\TryC}{\textit{tryC}}
\newcommand{\tryC}{\textit{tryC}}
\newcommand{\ignore}[1]{}
\pgfplotsset{compat=1.15} 
\pgfplotsset{
    grid style = {
        dash pattern	= on 4mm off 1mm,
        line cap		= round, black!15,
        line width		= .2pt
    },
}
\pgfplotsset{
    base/.style={
        scale only axis,
        width          = 6in,
        height         = 3in,
        tick align     = inside,
        axis x line*   = bottom,
        axis y line*   = left,
        x label style  = {at = {(0.5,-.12)}},
        y tick label style	= {/pgf/number format/assume math mode},
        enlarge x limits	= 0.11,
    }
}
\pgfplotsset{
    nnc/.style={
        ybar		= 1pt,
        bar width	= 12.5pt, 
        nodes near coords, 
    }
}
\tikzset{every picture/.style={scale=1}, font = \Large}
\title{Efficient Parallel Execution of Blockchain Transactions Leveraging Conflict Specifications}
\titlerunning{Efficient Parallel Execution of Blockchain Transactions Leveraging Conflict Specifications}
\author{Parwat Singh Anjana\thanks{Corresponding authors.}}{Supra Research}{p.anjana@supra.com}{0000-0002-6574-3871}{}
\author{Matin Amini}{University of Southern California}{matinami@usc.edu}{0009-0006-8776-3925}{}
\author{Rohit Kapoor}{Supra Research}{r.kapoor@supra.com}{0009-0001-5855-3367}{}
\author{Rahul Parmar}{Supra Research}{r.parmar@supra.com}{0009-0004-4781-9424}{}
\author{Raghavendra Ramesh}{Supra Research}{r.ramesh@supra.com}{0000-0002-6289-9723}{}
\author{Srivatsan Ravi$^1$}{Supra Research \and University of Southern California}{s.ravi@supra.com}{0000-0002-2965-3940}{}
\author{Joshua Tobkin}{Supra Research}{j.tobkin@supra.com}{}{}
\authorrunning{Supra Research}
\newcommand{\revision}[1]{\textcolor{black}{#1}}
\begin{document}
    \maketitle
    \begin{abstract}
Parallel execution of smart contract transactions in large multicore architectures is critical for higher efficiency and improved throughput. 
The main bottleneck for maximizing the throughput of a node through parallel execution is transaction conflict resolution: when two transactions interact with the same data, like an account balance, their order matters. Imagine one transaction sends tokens from account A to account B, and another tries to send tokens from account B to account C. If the second transaction happens before the first one, the token balance in account B might be wrong, causing the entire system to break. Conflicts like these must be managed carefully, or you end up with an inconsistent, unusable blockchain state.

Traditional \stmemory{} (\STM{}) has been identified as a possible abstraction for the concurrent execution of transactions within a block, with \BlockSTM{} pioneering its application for efficient blockchain transaction processing on multicore validator nodes. 
This paper presents a parallel execution methodology that leverages conflict specification information of the transactions
for \btmemory{} (\BTM{}) algorithms. 
Our experimental analysis, conducted over synthetic transactional workloads and real-world blocks, demonstrates that \BTM{s} leveraging conflict specifications outperform their plain counterparts on both EVM and MoveVM. 
Our proposed \BTM{} implementations achieve up to $1.75\times$ speedup over sequential execution and outperform the state-of-the-art Parallel-EVM execution by up to $1.33\times$ across synthetic workloads. 
\end{abstract}

    \tableofcontents
    \markboth{Supra Research}{Efficient Parallel Execution of Blockchain Transactions Leveraging Conflict Specifications}
    
\section{Introduction}\label{sec:intro}   
%
Blockchains are in a race to reduce transaction latency
, improving user experience and increasing throughput to meet anticipated demand. This optimization effort has examined every step of a blockchain transaction, including data dissemination, ordering, execution, and storage. In this paper, we investigate the state-of-the-art in optimizing the execution step of a transaction workflow. 


Blockchains require all nodes to deterministically reach the same final state when executing a block of transactions. To leverage multicore architectures, modern blockchain designs seek to maximize parallel execution throughput while preserving deterministic consistency across all nodes. Traditional \stmemory{} (\STM{}) has been identified as a possible abstraction for the concurrent execution of transactions within a block. However, 
the execution order must follow a \emph{\preset{} order} of transactions in a block.
The main bottleneck for maximizing the throughput of a node through \btmemory{} (\BTM{}) that respects the \preset{} order  is transaction \emph{conflict} resolution: when two transactions, $T_1$ and $T_2$ in a block, interact with the same state, like an account balance, order in which the reads and writes on the state occurs matters for the correctness of the execution. Imagine that a transaction $T_1$ sends tokens from account $A$ to account $B$, and another transaction $T_2$ tries to send tokens from account $B$ to account $C$. If $T_2$ happens before the first one, the token balance in account B might be wrong, causing the entire system to break. Conflicts like these must be managed carefully, or you end up with an inconsistent, unusable blockchain state.

Ethereum maintain consistency by sequentially executing a block of transactions, underutilizing the multicore architectures of its nodes. In contrast, Solana~\cite{solana-url} and Aptos~\cite{aptos} utilize parallel execution, though in different settings. Aptos employs the classical \STM{} technique to execute transactions in parallel with \BTM{}, without requiring any priori read-write access specifications with transactions. 
Aptos’ 
\BlockSTM{}~\cite{blockstm}, represents the current state-of-the-art, applying classical \STM{}~\cite{STM95} techniques to the execution of ordered blocks of transactions. \STM{} techniques are typically speculative or optimistic, meaning that they attempt to execute as many transactions in parallel as possible, while detecting and resolving conflicts. If a transaction reads a value that later changes due to another transaction, it must be re-executed to maintain consistency.
In contrast, Solana's parallel execution requires transactions to be tagged with the storage locations (accounts) read and written, though this is not a \STM{}-based algorithm. We wondered whether the read-write access specifications could enhance a \BTM{} algorithm's performance, and our findings indicated a positive correlation.

Firstly, we observe that some popular blockchains equip transactions a priori with \emph{read-write sets}, i.e., the sets of accounts that the transaction \emph{may} access.
For instance, Solana leverages user-provided read-write sets as access specifications in its lock-profile-based iterative, parallel execution strategy~\cite{solanaSealevel}. Similarly, Sui~\cite{suiParallelExe} utilizes user-provided access specifications to execute transactions in a causally-ordered manner.
Secondly, we further observe that, when such access specifications are not available a priori, it is possible to derive access specifications for public entry functions statically at the time of deployment of smart contracts using data-flow analysis on the smart contract code. This is a one-time computation, enabling efficient utilization of these specifications during transaction execution.
These access specifications essentially serve as \emph{conflict specifications} for any parallel execution technique on ordered blocks of transactions. As a complementary approach, we observe that it is possible to build conflict specifications efficiently at runtime by checking whether the accounts affected by a set of transactions (for instance, \emph{payment transfers} in Aptos Move or \emph{ETH} or \emph{\ERC{}} transfers in Ethereum~\cite{ethereum}) are mutually disjoint.

This is where we see the challenge of improving \BTM{} to further optimize execution times by leveraging transaction conflict specifications. 
Specifically, we seek to answer the following research questions:
Do conflict specifications improve transaction-execution throughput in existing \BTM{} techniques? If so, what is the best \BTM{} technique to optimally leverage these conflict specifications?
By constructively answering these questions, we aim to advance the frontiers of parallel transaction execution, driving significant improvements in blockchain scalability and efficiency.

\noindent\textbf{Contributions.}
This paper presents a holistic methodology for efficient parallel execution of blockchain transactions.
\begin{enumerate}
    \item Leveraging our insight that conflict specifications must be exploited for maximizing throughout in parallel execution, we detail two \BTM{} algorithms, \saSupraSTM{} and \saBlockSTM{}, that can efficiently execute block transactions when \emph{sound}, but possibly \emph{incomplete} specifications are available.
    
    \item We present implementations of our algorithms for blockchains like Ethereum VM (EVM) and Aptos MoveVM and conduct a detailed empirical analysis on real-world blocks, thoroughly analyzing the best and worst-case performance. Additionally, we design a workload generator for analyzing performance on large blocks and unconventional transactional workloads. 
    
    \item We present a rigorous formalism and implementation for how conflict specifications are efficiently constructed for EVM and MoveVM. We then present an integrated implementation (\osaBlockSTM{}) that outperforms existing parallel execution approaches.
    
    \item We evaluated the performance of integrated implementation against sequential execution and state-of-the-art Parallel-EVM (\PEVM{})~\cite{pevm} on both synthetic and historical blocks. In synthetic workload, the proposed \iBTM{} achieve up to $1.75\times$ speedup over sequential and $1.33\times$ over \PEVM{} and maintain consistent improvements, with an average speedup of $1.24\times$ over \PEVM{}. It achieves a maximum speedup of $2.43\times$ over sequential execution on historical workloads. 

\end{enumerate}
The methodology is rigorously demonstrated for EVM and MoveVM, for which conflict specifications are algorithmically derived as part of our integrated implementation. However, we remark that the purpose of our modular presentation and detailed ablation studies is to demonstrate how easily our parallel execution algorithms can be applicable to other blockchain ecosystems, specifically Sui and Solana, which explicitly provide a priori transactional access sets that can be used to derive conflict specifications. By identifying dependencies upfront, we can avoid unnecessary rollbacks and retries, making our system much more efficient for prevalent blockchain infrastructures. This allows the system to automatically separate transactions that need special handling due to dependencies, making the entire execution process more streamlined and as we demonstrate, highly efficient. {Finally, we present how this approach also unleashes the possibility of workload \emph{adaptive} execution that can leverage a sequential or parallel execution algorithm, depending on the block's ``conflict threshold''.}

\noindent\textbf{Roadmap.}
The rest of the paper is organized as follows. \Cref{sec:related,sec:motivation} presents related work and motivation on parallel execution of block transactions, respectively. \Cref{sec:model-algo} introduces the system model and the proposed \BTM{} algorithms (\saSupraSTM{} and \saBlockSTM{}). \Cref{Sec:experiments} presents experimental results on parallel execution using conflict specifications. \Cref{sec:spec} formalize and implement conflict specification generation and evaluate the full integration in both real-world and synthetic workloads. \ifboolexpr{bool{ArXiv}}{}{For succinctness of presentation, we focus on the EVM in the paper and only summarize the MoveVM implementation and evaluation results.}
We conclude in \Cref{sec:conc}.

    \vspace{-5pt}
\section{Related Work}
\label{sec:related}
%
This section reviews smart contract execution models and existing parallel execution algorithms for block transactional execution.
%

Several models have been proposed for the execution of transactions in blockchains; one such model involves the \emph{block proposer} executing transactions, generating a block containing state differences
, and subsequently propagating this block across the network for validators to verify. Another approach entails the block proposer determining the order of transactions in a block, after which all nodes reach consensus on this order before executing transactions in parallel. The former is known as the \emph{Ethereum model}~\cite{ethereum}, while the latter is known as the \emph{Aptos model}~\cite{aptos}. Additionally, a third model facilitates parallel execution by incorporating read-write sets with transactions; this is commonly referred to as the \emph{Solana model}~\cite{solana-url}. A fourth model exploits resource ownership to enable parallel execution, a methodology known as the \emph{Sui model}~\cite{suiDoc}.  

These four execution models can be broadly categorized into two distinct classes that aim to optimize transaction execution
. The first class is called the \emph{\rwAware{} execution} which relies on transaction access hints provided by clients to facilitate parallel execution either through preprocessing in the form of a directed acyclic graph (DAG) or runtime techniques based on lock profiling. The second class comprises techniques that leverage run-time execution techniques, such as \STM{} or locks, to optimistically execute transactions. We call this approach the \emph{\rwOblivious{} execution}. 

\ifboolexpr{bool{ArXiv}}{
\revision{\Cref{fig:related_work} situates existing parallel execution approaches for blockchain transactions within this taxonomy. The \rwAware{} class utilizes access specifications, in the form of read-write sets, to construct lock-based or DAG-based schedules, enabling high parallelism in blockchain networks such as Sui~\cite{suiDoc} and Solana~\cite{umbraresearch}. However, these approaches depend on transactions being annotated with client-provided access specifications, which are often inaccurate due to delays between their generation and the actual time of execution, a limitation that is evident in the higher failure rate of non-voting transactions observed in Solana blocks. Moreover, they introduce additional bandwidth overhead to disseminate this extra metadata with transactions across the network. In contrast, techniques in the \rwOblivious{} class operate without prior access specification knowledge. These systems rely on runtime conflict detection, using optimistic or pessimistic \btmemory{} executions or lock-based techniques to identify and resolve conflicts, examples include \BlockSTM{}~\cite{blockstm}, and other techniques~\cite{Anjana:OptSC:PDP:2019,anjana2022optsmartDAPD22022,Dickerson+:ACSC:PODC:2017,polygonPosParallelization,VikramHerlihy:EmpSdy-Con:Tokenomics:2019,seigiga}. These approaches rely on identifying and resolving conflicts at runtime through speculative execution. Under highly conflicting workloads, these could incur significant overhead that may outweigh the benefits of parallelism, potentially leading to worse performance than sequential execution.}

\begin{figure}[!t]
    \centering
    \scalebox{.60}{\input{figs/concSC_v1}}
    \caption{State-of-the-art techniques for parallel execution of blockchain transactions.}
    \label{fig:related_work}
\end{figure}

}{}

\noindent
\textbf{\RWAware{} Execution.} Transactions on blockchains such as Solana~\cite{solana-url} and Sui~\cite{suiDoc} upfront specify the accounts they access in read and write mode during execution. These access hints are used to enable parallel execution, either through static analysis or by employing a runtime scheduler that resolves read-write set conflicts and executes transactions in parallel. Alternatively, a DAG can be constructed from the read-write access sets to partition transactions into independent groups (iterations in Solana) for parallel execution. 
Specifically, Solana's~\cite{solana-url} SeaLevel~\cite{umbraresearch,solanaSealevel}, leverages read-write sets and lock profiling to execute transactions iteratively. Each iteration involves a locking phase to detect conflicts and an execution phase where non-conflicting transactions are executed in parallel, while conflicting transactions are deferred to subsequent iterations until all transactions in the block are processed. The iteration information is then included in the block by the block proposer to support parallel execution during validation at the validators.



In contrast, Sui~\cite{suiDoc} introduces an object-based state model to identify independent transactions. Objects are shared or exclusively owned, each with a unique identity and owner address. Dependency identification is simplified by determining whether multiple transactions access the same shared object. Based on this model, Sui enables parallel execution~\cite{caseofppchains2022,suiParallelExe} where transactions on owned objects are executed in parallel and can completely bypass consensus, while those on shared objects run sequentially through consensus to avoid conflicts. 
Other works, such as ParBlockchain~\cite{amiri2019parblockchain} and Hyperledger Sawtooth~\cite{sawtooth}, employ lock-based or conflict-analysis techniques, while DiPETrans~\cite{Baheti-DiPETrans-CCPE2022} and the efficient scheduler for Sawtooth~\cite{piduguralla2023dag} utilize DAG-based \rwAware{} execution.


\noindent\textbf{\RWOblivious{} Execution.} 
In this class, the goal is to execute an ordered set of transactions in parallel as if they were executed sequentially and arrive at the same state. The key idea is that some transactions may not be conflicting, i.e, they do not read or write any common \account{s}; hence, can be executed in parallel, enabling execution acceleration that arrives at the correct sequential result.

For the Ethereum model, Dickerson et~al.~\cite{Dickerson+:ACSC:PODC:2017} propose, the first pioneering work on parallel execution of blockchain transactions. They proposed using pessimistic Scala\STM{} for parallel execution at the block proposer and a happen-before graph for parallel execution at the validators. Later, Anjana et~al.~\cite{Anjana-ObjSC-Netys-2020,Anjana:OptSC:PDP:2019} proposed an optimistic \STM{} (\ostm{}) based multi-version timestamp ordering protocol for parallel execution at the proposer, while a DAG-based efficient parallel execution at validators. Saraph and Herlihy~\cite{VikramHerlihy:EmpSdy-Con:Tokenomics:2019} proposed a simple \emph{bin-based two-phase} approach. In the first phase, the proposer uses locks and tries to execute transactions concurrently by rolling back those that lead to conflict(s). Aborted transactions are kept in a sequential bin and executed sequentially in the second phase. 
Later, OptSmart~\cite{anjana2022optsmartDAPD22022} proposed an approach that combines the idea of bin-based approach with the \ostm{} approach for efficient parallel execution.

In the Aptos model, differed execution of transaction based on \presetSerializable{} and multi-version concurrency control is proposed in the \Blockstm{}~\cite{blockstm}. Rather than speculatively executing block transactions in any order, they employ it on ordered-sets, called the \emph{\preset{} order}. Each validator uses \Blockstm{} independently to execute a leader proposal of an ordered set of transactions in parallel to get the same state. This has currently been implemented on the Aptos blockchain~\cite{aptos} and is the most promising approach, as it does not require additional information to be attached to the transactions or in the block for parallel execution. It has been adopted for execution on the Polygon PoS Chain~\cite{polygonPosParallelization}, where it is already live on the mainnet. The proposer uses \Blockstm{} to execute transactions in parallel and includes a DAG in the block to allow deterministic and safe parallel execution at validators. Recently, SeiGiga~\cite{seigiga} has also adopted \Blockstm{}.

\revision{
In contrast to state-of-the-art parallel execution approaches, we propose two \BTM{} approaches, \saSupraSTM{} and \saBlockSTM{}, which take conflict specifications as input and execute transactions efficiently by minimizing aborts and re-execution overhead. We then present a conflict analyzer integrated version of the \saBlockSTM{} approach, which we call the \iBTM{}. The \iBTM{} derives access specifications by leveraging data already available in block transactions and deployed smart contracts, without incurring additional bandwidth overhead or conflict-specification generation costs that cannot be offset by the benefits of parallelism. The proposed \iBTM{} algorithm efficiently executes block transactions leveraging conflict specifications generated by the conflict analyzer. The proposed techniques combine the advantages of both the \rwAware{} and \rwOblivious{} execution models while ensuring \preset{} serialization of block transactions.}
    \section{Motivation and Overview}\label{sec:motivation}
This section motivates the need for parallel execution in blockchain systems and outlines the benefit of leveraging conflict specification for parallel execution. 

\begin{figure}[!t]
    \centering
    \scalebox{.72}{\input{figs/safe-exe}}
    \caption{Safe execution of transactions in \preset{} order: sub-figure (a) illustrates that when conflicting transactions T$_1$ and T$_2$ execute and commit in an arbitrary order, it will result in an unsafe execution; (b) illustrates that transactions commit in some serialization order other than \presetSerialization{}, resulting in an unsafe execution. In sub-figure (c), to ensure safe execution, T$_2$ waits for T$_1$ to commit before committing.}
    \label{fig:safe-exe}
\end{figure}

\noindent\textbf{Motivation.}
Naturally, the throughput of \STM{}-based execution varies widely based on the level of transaction conflicts, aborts, and re-executions (triggered re-validations of the transaction's read state), presenting both a challenge and an opportunity to further optimize execution time.

Consider the scenario shown in \Cref{fig:safe-exe}, in which we have two transactions $T_1$ and $T_2$ running concurrently. We consider that $T_1$ must precede $T_2$ in the \preset{} order (denoted $T_1\to T_2$). Without \emph{a priori} knowledge of read-write conflicts in the set of accounts (\account{s}) accessed, committing $T_2$ prior to committing $T_1$ may result in a safety violation. To illustrate this, consider the following execution: $T_1$ reads an account $X_1$ (value $0$), following which $T_2$ reads $X_2$ (value $0$), followed by a write of a new value $1$ to $X_1$. Observe that if $T_2$ commits at this point in the execution and if $T_1$ writes a new value $1$ to $X_2$ after the commit of $T_2$, then the resulting execution does not respect the \preset{} order in any extension. Clearly, if $T_1$ commits, the resulting execution is not equivalent to any sequential execution, as shown in \Cref{fig:safe-exe}a. Alternatively, if $T_1$ aborts and then re-starts, any read of $X_2$ will return the value $1$ that is written by $T_2$, thus not respecting the \preset{} order, as illustrated in \Cref{fig:safe-exe}b. Consequently, the only possible way to avoid this, requires $T_2$ to wait until $T_1$ commits (see \Cref{fig:safe-exe}c). Now consider a modification of this execution in which $T_2$ reads $X_3$ and writes to $X_4$. In this case, $T_2$ does not have a \emph{read-from conflict} with $T_1$ allowing $T_1$ and $T_2$ to run in parallel with almost no synchronization. 
However, this is possible only if threads that execute these transactions are \emph{aware} of the conflict prior to the execution.

After studying both read-write oblivious and aware models (taxonomized in \cref{sec:related}), we conclude that the approach that leverage conflict specifications is the best way forward. It can be considered the ``Goldilocks'' approach which creates conflict specifications to overcome the limitations of parallel execution imposed by the \preset{} order of transaction. This is what will allow us to scale with parallel execution efficiently and maximize throughput.


\noindent\textbf{Overview.} 
Consider the execution of four transactions T$_1$$-$T$_4$ depicted in \Cref{fig:sece}, as a running example. 
In \Cref{fig:sece}a, we observe a sequential execution, despite T$_1$ and T$_2$ accessing different \state{s}, limiting throughput. This is, for instance, the case in Ethereum~\cite{ethereum} where transactions are executed sequentially. In contrast, \Cref{fig:sece}b illustrates parallel execution, where T$_1$, T$_2$ and T$_3$ execute in parallel and improve throughput. However, T$_4$ has a \emph{reads-from} conflict with T$_2$ on X$_3$ and therefore must wait for T$_2$ to commit. 
Moreover, T$_4$ has a \emph{write} conflict with T$_3$ and must either wait for T$_3$ to commit or employ data structures that allow tracking of both writes performed to X$_5$ (\`a la \emph{multi-versioning}~\cite{tm-book}).

\begin{figure}[!t]
    \centering
    \scalebox{.6}{\input{figs/secv_v3}}
	\caption{Leveraging conflict specifications a priori for parallel execution.}
	\label{fig:sece}
\end{figure}

As we constructively observe, the \state{s} accessed by these transactions can be efficiently derived a priori to build \emph{sound}, but possibly incomplete specifications. Depending on how complete the derived specifications are, this information can be used to minimize runtime transactional conflicts and thus avoid unnecessary transaction aborts during parallel execution. However, as the results in this paper show, not only are there rigorous theoretical approaches to deriving conflict specifications in smart contract ecosystems like Ethereum, fast techniques exist to identify a non-trivial number of transactional conflicts within a block, which provide significant speedup for parallel execution.

    \section{\BTM{} Execution From Conflict Specifications}\label{sec:model-algo}
In this section, we formally present \btmemory{} in the asynchronous shared memory model, detail our \BTM{} algorithms that leverage conflict specification and satisfy \preset{} serializability.
\subsection{Model}
The model of \BTM{} in this paper is presented in the standard asynchronous shared memory model~\cite{Lyn96}.

\noindent\textbf{Transactions.} A \emph{transaction} is a sequence of \emph{transactional operations}, reads and writes, performed on a set of virtual machine (VM) states. A \BTM{}[1,\ldots, n] \emph{implementation} provides a set of
concurrent \emph{processes} with deterministic algorithms that implement reads and
writes on accounts using a set of \emph{shared memory locations} accessed by the $n$ transactions with a \preset{} order $T_1 \to , \ldots, T_n$.
More precisely, for each transaction $T_k$, a \BTM{} implementation must support the following operations: 
\Read$_k$($X$), where $X$ is an object, that returns a value in
a domain $V$
or a special value $A_k\notin V$ (\emph{abort}),
\Write$_k$($X$, $v$), for a value $v \in V$,
that returns $\mathit{ok}$ or $A_k$, and
$\mathit{\TryC}_k$ that returns $C_k\notin V$ (\emph{commit}) or $A_k$.
The transaction $T_k$ completes when any of its operations returns $A_k$ or $C_k$.

\noindent\textbf{Executions, Histories and Conflicts.}
A \BTM{} \emph{execution} is a sequence of \emph{events} performed on shared memory \state{s} by an interleaving of transactions as prescribed by the implementation. To avoid introducing additional technical machinery that is not strictly necessary to follow the algorithmic exposition in this paper, we do not define executions using configuration semantics as is common in traditional shared memory systems~\cite{tm-book,Lyn96}, although it would be straightforward to do so. 

A \BTM{} \emph{history} is the subsequence of an execution consisting of the invocation and response events of operations of the transactions. For a transaction $T_k$, we denote all objects accessed by its read and write as \Rset($T_k$) and \Wset($T_k$), respectively. We denote all the operations of a transaction $T_k$ as \Dset($T_k$). The \emph{read set} (resp., the \emph{write set}) of a transaction $T_k$ in an execution $E$, denoted \Rset$_E$($T_k$) (resp. \Wset$_E$($T_k$)), is the set of objects that $T_k$ attempts to read (resp. write) by issuing a read (resp. write) invocation in $E$. The \emph{data set} of $T_k$ is \Dset($T_k$) = \Rset($T_k$)$\cup$\Wset($T_k$). $T_k$ is called \emph{read-only} if \Wset($T_k$) $=\emptyset$; \emph{write-only} if \Rset($T_k$)$\ =\emptyset$ and \emph{updating} if \Wset($T_k$) $\neq\emptyset$.

We say 
$T_i$ and $T_j$ \emph{conflict} in an execution $E$ if there exists a common \state{} $X$ in \Dset($T_i$) and \Dset($T_j$) such that $X$ is contained within \Wset($T_i$) or \Wset($T_j$), or both. Furthermore, we say that $T_i, T_j$ \emph{read-from conflict} if $\Wset(T_i) \cap \Rset(T_j) \neq \emptyset$ and $T_i$ appears before $T_j$ in the \preset{} order. Note that the definition of read-from conflict, unlike that of a conflict, relies on a \preset{} order existing between the two transactions.

Let $H$ be a 
sequential history, i.e., no two transactions are concurrent in $H$.
For every operation \Read$_k$($X$) in $H$,
we define the \emph{latest written value} of $X$ as follows:
if $T_k$ contains a \Write$_k$($X$, $v$) that precedes \Read$_k$($X$),
then the latest written value of $X$ is the value of the latest such write to $X$.
Otherwise,
the latest written value of $X$ is the value
of the argument of the latest \Write$_m$($X$, $v$) that precedes
\Read$_k$($X$) and belongs to a committed transaction in $H$.
This write is well-defined since $H$ can be assumed to start with an initial transaction writing to all \account{s}. 
We say that a sequential history $S$ is \emph{legal} if every read of a \account{} returns the \emph{latest written value} of this \account{} in $S$. It means that sequential history $S$ is legal if all its reads are legal.

\vspace{2pt}
\noindent\textbf{\Preset{} Serializability.}
Given a set of $n$ transactions with a \emph{\preset{}} order $T_1 \to T_2 \to... \to T_n$, we need a deterministic parallel execution protocol that efficiently executes block transactions using the serialization order and always leads to the same state, even when executed sequentially. We formalize this using the definition of \emph{\preset{} serializability}. 
\begin{definition}[\Preset{} serializability]\label{def:presetserializability}
Let $H$ be a history of a \BSTM{}[1,\ldots, n] implementation $M$. We say that $H$ is \emph{\presetSerializable{}} if $H$ is equivalent to a legal sequential history $S$ that is $H_1 \cdots H_{i}\cdot H_{i\text{+}1} \cdots H_n$ where $H_i$ is the complete history of transaction $T_i$.  
We say that a \BSTM{}[1,\ldots, n] implementation $M$ is \emph{\presetSerializable{}} if every history of $M$ is \presetSerializable{}. 
\end{definition}




\ifthenelse{\boolean{ArXiv}}{}

\subsection{Algorithm overview}\label{sec:algo_overview}
The core idea behind our algorithm is that only independent transactions are executed in parallel, ensuring that no race conditions arise during execution. For any transaction $T_k$, if there exists a transaction $T_i$ $\not\in$ \Cset{(T$_k$)} that precedes $T_k$ in the \preset{} serialization order, the execution of $T_k$ is deferred until $T_i$ completes (here \Cset{(T$_k$)} denotes the set of transactions that do not conflict with $T_k$). Additionally, after executing each transaction, validation is performed to ensure that if any specification for a transaction is incorrect and two dependent transactions are executed in parallel, the transaction higher in the \preset{} serialization order can abort and re-execute. Thus, the output of the algorithm will be the same as that of a sequential execution.

Performance improvement over parallel execution techniques based on \BSTM{} is achieved by reducing the number of aborts and re-executions. This is made possible by leveraging the knowledge of the transaction conflict specifications: a transaction $T_k$ is executed only after ensuring that all preceding transactions in the \preset{} serialization order belong to its independence set, \Cset{(T$_k$)}. This targeted execution strategy minimizes conflicts and improves throughput.

\noindent\textbf{Problem Statement.}
In this section, we ask the following question: given a set of $n$ transactions $T_1\to,\ldots,\to T_n$ and \Cset{(T$_k$)} for all $k\in \{1,\ldots, n \}$, what are the most efficient algorithms for implementing 
\BSTM{}[1,\ldots, n]. 
We implemented two different algorithms, leveraging conflict specification in 
DAG \SSTM{} (\saSupraSTM{}) and 
optimized \BTM{} (\saBlockSTM{}), based on how the \emph{scheduler} utilizes the transaction conflict specification for efficient parallel execution. As explained in \cref{sec:intro,sec:related}, this mechanism for implementing \BTM{} is applicable directly to the \rwAware{} models like Sui and Solana in which the \BTM{} is explicitly provided the transactional read-write sets. More importantly, as we demonstrate in \cref{sec:spec}, we can efficiently construct conflict specifications for \rwOblivious{} models like Ethereum EVM and Aptos MoveVM, even if under-approximate (i.e., incomplete), and still reap the benefits of our proposed methodologies for implementing \BTM{}. 

\ifthenelse{\boolean{ArXiv}}
  {
\paragraph*{Leveraging Conflict Specification in DAG \SSTM{} (\saSupraSTM{})}  

This approach utilizes the conflict specifications of the transactions (which are assumed to be correct or overapproximated) and the \preset{} order to create a DAG, which is used as a partial order and serves as input for the scheduler-less execution algorithm. In the DAG, transactions are represented as vertices, whereas conflicts among transactions are denoted as directed edges. The \emph{indegree} field is added with each vertex (transaction) to track dependencies with prior transactions in the \preset{} order; a transaction becomes eligible for execution when its indegree is zero. 
A transaction is considered independent if its \Cset{} includes all transactions that precede it in the \preset{} order and will not have edges from any preceding transactions, resulting in an indegree of zero. During execution, the non-zero indegree transactions wait for the preceding transactions to commit and clear dependencies. 
At the time of commit, the committing transaction decreases the indegree of all dependent transactions. 

We now present the implementation of the \preset{}-serializable \saSupraSTM{} (\Cref{algo:sdag}), which uses transaction conflict specifications to construct a DAG representing a \preset{}-serializable partial order.


\noindent\textbf{Implementation state.} The \emph{DAG} is implemented using two primary data structures: 1. \textit{indegree}, where 
\textit{indegree[$k$]} denotes the number of preceding transactions in the \preset{} order that transaction $T_k$ depends on; and 2. \textit{dependents}, which maps each $T_k$ to a set of its dependent successors. 
For each \emph{\vmState{} $X_i$}, the algorithm maintains a memory location $v_i$ that stores a set of tuples $([v_1,k],[v_2,k'],\ldots)$. Each tuple $[v,k]$ represents the value $v$ of $X_i$ and $k$ is the transaction $T_k$ that wrote that value.

\begin{algorithm}[!tb]
\footnotesize
\caption{\saSupraSTM{}[1,\ldots, n]: It is a DAG-based scheduler to execute independent transactions in parallel. The indegree field is added with each transaction to track dependencies with prior transactions. Consider a transaction $T_k$ being executed by a process $p_k$. 
}
\label{algo:sdag}
\KwIn{${T}$: list of transactions in the block $B_i$; ${S}$: pre-state$-$ state before execution of block $B_{i}$; \\$\Cset{}$: specifications for transactions in $T$.
}
\KwData{$indegree[T_k]$: the number of transactions in \preset{} order $T_i$ is dependent on;\\ $dependents[k]$: the set of transactions dependent upon $T_k$; a version list $\langle v_j \rangle$ for each \account{} $X_j$.
}
\noindent
\begin{minipage}[t]{.46\textwidth}
\LinesNotNumbered
\SetNlSty{}{}{}

\SetKwProg{Pn}{Fun}{:}{}
\SetKwFunction{readk}{\texttt{\Read$_k$}}
\Pn{\readk{X$_j$}}{\label{alg:dag_read_s}
    \nlset{2} \If{X$_j$ $\not\in$ \Wset(T$_k$)}{\label{alg:dag_read_check}
        
        \tcp{Read the latest version of $X_j$ created by a $T_i$ that precedes $T_k$.}
        
        \nlset{3} [ov$_j$, i] $:=$ \emph{read\_lvp}(T$_k$, X$_j$) \label{alg:dag_read_lvp}
        
        \nlset{4} \Rset(T$_k$) $:=$ \Rset(T$_k$) $\cup$ \{X$_j$, [ov$_j$, i]\} \label{alg:dag_read_rset}
        
        \nlset{5} \Return ov$_j$ 
    }
     \nlset{6} \Else{
        \tcp{$X_j$ is in \Wset(T$_k$)} 
        
        \nlset{7} [ov$_j$, $\bot$ ] $:=$ \Wset(T$_k$).$\lit{locate}$(X$_j$) \label{alg:dag_read_wset}
    
        \nlset{8} \Return ov$_j$ \label{alg:dag_read_e}
    }
}

\SetKwFunction{lvp}{$\emph{read\_lvp}$}
\SetKwProg{Pn}{Fun}{:}{}
\nlset{9}\Pn{\lvp{T$_k$, X$_j$}}{\label{alg:dag_lvp_s}

    \nlset{10} [ov, i] $:=$ [0, 0];
    
    \tcp{Read the largest version of X$_j$ created by a $T_i$ that precedes $T_k$}
    
    \nlset{11} \ForAll{[ov$_j$, i] $\in$ X$_j$}{
    
        \nlset{12} \If{$k > i$}{
        
            \nlset{13} [ov, i] $:=$ [ov$_j$, i]
        }    
    }
    
    \nlset{14} \Return{[ov$_j$, i]}\label{alg:dag_lvp_e}
}

\SetKwFunction{writemvk}{\Write{$_k$}}
\SetKwProg{Pn}{Fun}{:}{}
\nlset{15}\Pn{\writemvk{X$_j$, v}}{\label{alg:dag_write_s}

    \nlset{16} nv$_j$ $:=$ $v$
    
    \nlset{17} \If{X$_j$ $\not\in$ \Wset(T$_k$)}{\label{alg:dag_write_check}
    
        \nlset{18} \Wset(T$_k$) $:=$ \Wset(T$_k$) $\cup$ \{X$_j$, [nv$_j$, k]\} \label{alg:dag_write_wset}
    }
    \nlset{19} \Else{
    
        \tcp{$X_j$ is in $\Wset(T_k)$, update its current value to $v$.}
        
        \nlset{20} \Wset(T$_k$) $:=$ \Wset(T$_k$).$\lit{update}$(X$_j$, [nv$_j$, k]) \label{alg:dag_write_update}
    }
    \nlset{21} \Return{$ok$} \label{alg:dag_write_e}
}

\end{minipage}%
\hfill
\begin{minipage}[t]{.4\textwidth}
\LinesNotNumbered
\SetNlSty{}{}{}

\SetKwFunction{TryCk}{\tryC{$_k$}}
\SetKwProg{Fn}{Fun}{:}{}
\nlset{22}\Fn{\TryCk()}{\label{alg:dag_tryc_s}
    \tcp{Ensure commit order}
    
    \nlset{23} \If{$indegree[k] \neq 0$}{\label{alg:dag_tryc_check}
    
        \nlset{24} wait until $indegree[k] = 0$;\label{alg:dag_tryc_wait}
    }

    \tcp{Write back to shared memory}
    \nlset{25} \ForAll{X$_j$ $\in$ \Wset(T$_k$)}{\label{alg:dag_tryc_wset}
    
        \nlset{26} Write(X$_j$, [nv$_j$, k])\label{alg:dag_tryc_wset_update}
    }

    \tcp{Clear dependencies}
    \nlset{27}\ForAll{T$_i$ $\in$ dependents(i)}{\label{alg:dag_tryc_indegree}
    
        \nlset{28} $indegree[i] \gets indegree[i] - 1$;\label{alg:dag_tryc_indegree-1}
    }

    \nlset{29} \Return $C_k$ \label{alg:dag_tryc_e}
}

\SetKwProg{Fn}{Fun}{:}{}
\SetKwFunction{genDag}{$gen\_dag$}
\nlset{30}\Fn{\genDag{\Cset{}}}{\label{alg:dag_start}

    \nlset{31}$block\_size \gets size\_of(B_i)$;

    \nlset{32}\For{$k \in $ (0, block\_size - 1))}{\label{alg:dag_init_start}
    
        \nlset{33}$dependencies[k] := \emptyset$
        
        \nlset{34}$dependents[k] := \emptyset$
        
        \nlset{35}$indegree[k] \gets 0$;\label{alg:dag_init_end}
    }

    \tcp{Compute complement sets}
    \nlset{36}\For{$k \in $ (0, block\_size - 1)}{\label{alg:dag_comp_start}
    
        \nlset{37}$\Cset{}Comp(T_k) \gets \{ j \mid j < k \ \text{and} \ j \notin \Cset{(T_k)}\}$;\label{alg:dag_comp_end}
    }

    \tcp{Use \Cset{}Comp to build dependencies}
    
    \nlset{38}\For{$k \in $ (0, block\_size - 1)}{\label{alg:dag_build_cset}
    
        \nlset{39}\ForEach{$j \in \Cset{}Comp(T_k)$}{
        
            \nlset{40}$dependencies[k].add(j)$\label{alg:dag_dep_add}
            
            \nlset{41}$dependents[j].add(k)$\label{alg:dag_depnt_add}
        }
        \nlset{42}$indegree[k] := |dependencies[k]|$\label{alg:dag_indegree_final}
    }

    \nlset{43}\Return $(dependents, indegree)$\label{alg:dag_end}
}
\end{minipage}
\end{algorithm}

\noindent\textbf{Read implementation.} The \textit{\Read{$_k(X_j)$}} function returns the value of \account{} $X_j$ as visible to transaction $T_k$. It first checks whether $X_j$ is already present in \Wset($T_k$) at line~\ref{alg:dag_read_check}. If not, it reads the largest version written by a transaction $T_i$ in shared memory such that $T_i\to T_k$, using the helper function \textit{read\_lvp()} defined in lines~\ref{alg:dag_lvp_s}-\ref{alg:dag_lvp_e}. This value, along with its source, is then recorded in \Rset($T_k$) at line~\ref{alg:dag_read_rset}. 
If $X_j$ has already been locally written by $T_k$, it is returned directly from the \Wset($T_k$) at line~\ref{alg:dag_read_wset}. 

The helper function \emph{read\_lvp}$(T_k, X_j)$ (lines~\ref{alg:dag_lvp_s}--\ref{alg:dag_lvp_e}) is invoked by \Read$_k$ to identify the latest version of $X_j$ in the multi-version data structure created by a transaction that precedes $T_k$ in the DAG. It iterates over all versions of $X_j$ in shared memory and selects the one with the highest index $i < k$. This mechanism ensures multi-version consistency by enforcing serialization semantics across transactions.

\noindent\textbf{Write implementation.} The \textit{\Write}$_k(X_j, v)$ function (lines~\ref{alg:dag_write_s} to \ref{alg:dag_write_e}) stages a write to \vmState{} $X_j$ with value $v$ in the local context of $T_k$. If this is the first write to $X_j$ by $T_k$, an entry is created in \Wset($T_k$). If $X_j$ has been written previously, the existing entry is updated with the new value. This operation does not alter shared memory but prepares the local write set for eventual commit.

\noindent\textbf{Commit implementation.} The \textit{\TryC{$_k$}()} function, lines~\ref{alg:dag_tryc_s} to \ref{alg:dag_tryc_e}, performs the commit logic for transaction $T_k$. It first waits until all dependencies are satisfied, that is, all preceding transactions in the DAG on which $T_k$ depends are committed and $\textit{indegree}[k] := 0$ (line~\ref{alg:dag_tryc_wait}). Then updates the shared memory with writes from $\Wset(T_k)$ at line~\ref{alg:dag_tryc_wset}. At line~\ref{alg:dag_tryc_indegree}, it clears the dependencies of all successor transactions, specifically, for each transaction $T_i > T_k$ where $T_k \notin \textit{\Cset{}}(i)$, it decrements $\textit{indegree}[i]$ by 1 at line~\ref{alg:dag_tryc_indegree-1}. 
Finally, the function returns the commit result $C_k$ at line~\ref{alg:dag_tryc_e}.

\noindent\textbf{DAG Generation.} The \textit{gen\_dag(\Cset{})} function, lines~\ref{alg:dag_start} to~\ref{alg:dag_end}, constructs a DAG from the specification \textit{\Cset{}}, which maps each transaction $T_k$ to the set of preceding transactions (in \preset{} order) that $T_k$ is independent of. It initializes the per-transaction data structures (\textit{dependencies} $:= \emptyset$, \textit{dependents} $:= \emptyset$, and \textit{indegree}$\gets 0$) in lines~\ref{alg:dag_init_start}-\ref{alg:dag_init_end}. Then, computes the complement conflict set \textit{\Cset{}Comp(T$_k$)} for each transaction $T_k$ $\in$ $T$ (line~\ref{alg:dag_comp_start}), which includes all preceding transactions of $T_k$ that are not in \textit{\Cset{(T$_k$)}}. These are the transactions with which $T_k$ conflicts, and hence it must be conservatively dependent on them in the DAG. Using \textit{\Cset{}Comp}, the function constructs the dependency edges (lines~\ref{alg:dag_build_cset}--\ref{alg:dag_indegree_final}). 
For every $T_k$ $\in T$, the function sets \textit{indegree[$k$]} to the cardinality of the dependency set of $T_k$. Finally, the function returns references to the \textit{dependents} and \textit{indegree} structures at line~\ref{alg:dag_end}, which are used as input to DAG-based scheduler for execution.

\vspace{2pt}
\noindent\textit{\underline{Proof of Correctness}.} 
We prove \preset{} serializability for the \Cref{algo:sdag} by adapting the proof structure from traditional \STM{} literature for the block transactional model~\cite{tm-book,KR11}. 

%
\revision{
\begin{lemma}
\label{lm:alg}
\dBTM{}[1,\ldots, n] implementation described in \Cref{algo:sdag} is \preset{} serializable.
\end{lemma}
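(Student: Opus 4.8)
The plan is to show that an arbitrary history $H$ of \Cref{algo:sdag} is equivalent to the legal sequential history $S = H_1 \cdots H_n$ in which the transactions appear in their \preset{} order; by \Cref{def:presetserializability} this shows $H$ is \presetSerializable{}, and since $H$ is arbitrary, that $M$ is \presetSerializable{}. First I would observe that no operation in \Cref{algo:sdag} ever returns $A_k$: \Read$_k$ returns only a value, \Write$_k$ returns $ok$, and \tryC$_k$ returns $C_k$. Hence every transaction commits and $H$ contains exactly the operations of $S$, so it remains only to prove that each read returns the same value in $H$ as in $S$.

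The first key step is to pin down the order in which writes become visible in shared memory. I would establish the invariant that whenever $T_i$ and $T_j$ conflict with $i < j$, $T_i$ writes back to shared memory and commits before $T_j$ reads any \account{}. This follows from three facts: (i) soundness of the specification---since \Cset{} is assumed correct or over-approximated, a genuine conflict between $T_i$ and $T_j$ with $i<j$ forces $i \in \Cset{}Comp(T_j)$ (lines \ref{alg:dag_comp_start}--\ref{alg:dag_comp_end}), so $j$ depends on $i$ and $i$ contributes to $\textit{indegree}[j]$; (ii) the scheduler dispatches $T_j$ only once $\textit{indegree}[j]=0$ (\Cref{sec:algo_overview}), which by the decrement at line \ref{alg:dag_tryc_indegree-1} requires every conflicting predecessor to have reached \tryC{}; and (iii) inside \tryC{} the write-back to shared memory (line \ref{alg:dag_tryc_wset_update}) strictly precedes the indegree decrement (line \ref{alg:dag_tryc_indegree-1}). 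Because every dependency edge runs from a lower to a higher \preset{} index, the dependency graph is acyclic and the induced commit order is a topological order consistent with the \preset{} order on every conflicting pair.

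With visibility established, I would argue legality of each \Read$_k(X_j)$ by cases. If $X_j \in \Wset(T_k)$, the returned value (line \ref{alg:dag_read_wset}) is $T_k$'s own latest staged write, which is exactly the latest written value of $X_j$ in $S$. Otherwise \emph{read\_lvp} (lines \ref{alg:dag_lvp_s}--\ref{alg:dag_lvp_e}) scans the version list and returns the entry with the largest index $i<k$. By the visibility invariant, every transaction $T_m$ with $m<k$ that writes $X_j$ conflicts with $T_k$ and has therefore already committed its version to shared memory before $T_k$ reads; consequently the versions visible to \emph{read\_lvp} are precisely the writes to $X_j$ by \preset{}-predecessors of $T_k$, and selecting the maximal $i<k$ yields the write of the \emph{last} such predecessor (or the initial value $0$ when none exists, handled by the initialization $[ov,i]:=[0,0]$). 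This coincides with the latest written value of $X_j$ in $S$, so the read is legal and returns its $S$-value.

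Since every read returns its $S$-value and $S$ is legal and ordered by the \preset{} order, $H$ is equivalent to $S$, which completes the argument. The main obstacle I anticipate is the second step: rigorously arguing that the \textit{indegree} gate covers the \emph{read} phase of $T_k$ and not merely its commit phase. If a read could fire before a conflicting predecessor's write-back, \emph{read\_lvp} would observe a stale version and legality would fail; the argument therefore hinges critically on the scheduler eligibility rule (\Cref{sec:algo_overview}) that a transaction is dispatched only when its indegree is zero, together with the write-back-before-decrement ordering inside \tryC{}. I would make this dependence explicit, since it is exactly the place where the over-approximation assumption on \Cset{} is indispensable.
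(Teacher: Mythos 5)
Your proof is correct and reaches the lemma by a genuinely different route than the paper. The paper adapts the classical \STM{} proof template: it assigns linearization points to operations (reads at the return of \textit{read\_lvp}, commits at the write-back of line~\ref{alg:dag_tryc_wset}), assigns serialization points (committed updating transactions at $\ell_{\TryC_k}$, read-only ones at their last read), and proves legality of the induced sequential history $S$ by contradiction: supposing an intermediate committed writer $T_k$ with $T_i <_S T_k <_S T_j$, it rules out $\ell_{\TryC_k} <_E \ell_{\Read_j(X)}$ because \textit{read\_lvp} would then have returned $T_k$'s version, and rules out the remaining placement $\ell_{\Read_j(X)} <_E \ell_{\TryC_k}$ by invoking the indegree wait; consistency with the \preset{} order then follows from the serialization-point assignment. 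You instead dispense with the linearization machinery entirely: observing that no operation of \Cref{algo:sdag} returns $A_k$, you take $S$ to be the \preset{}-order history $H_1\cdots H_n$ outright and prove each read returns its $S$-value directly, via the visibility invariant that every conflicting predecessor has written back and committed before $T_k$ issues any read. Your route is more elementary, and it is also sharper on the one load-bearing point: the pseudocode gates \textit{indegree} only inside \tryC{} (line~\ref{alg:dag_tryc_wait}), and commit-time waiting alone cannot rule out a stale read taken before a conflicting predecessor's write-back; legality genuinely requires the scheduler-level dispatch rule of \Cref{sec:algo_overview} (a transaction becomes eligible for execution only when its indegree is zero), together with the write-back-before-decrement ordering and soundness of \Cset{}. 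The paper's proof rests on the same fact but cites the commit logic of \tryC{}, leaving implicit exactly the dependence you make explicit. What the paper's machinery buys is generality---serializing read-only transactions at their last read is the standard device when read-only transactions need not synchronize at commit---but for \dBTM{}, where \tryC{} waits unconditionally and the serialization is forced to coincide with the \preset{} order, your direct equivalence argument is sufficient and cleaner.
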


Let $E$ by any finite execution of the Algorithm. 
Let $<_E$ denote a total-order on events in $E$.
Let $H$ denote a subsequence of $E$ constructed by selecting
\emph{linearization points}
of operations performed in $E$.
The linearization point of a operation $op$, denoted as $\ell_{op}$ is associated with a memory location event or an event performed during 
the execution of $op$ using the following procedure. 
First, we obtain a modification of $E$ by removing 
every incomplete $\Read_k$, $\Write_k$, $\TryC_k$ operation from $E$.

\noindent\textit{Linearization points.}
We now associate linearization points to operations in the obtained completion of $E$ as follows: 
for every read by transaction $T_k$, $\ell_{op_k}$ is chosen as the event associated with the return of \textit{read\_lvp()}.
For every write $op_k$, $\ell_{op_k}$ is chosen as the invocation event of $op_k$.
For every $op_k=\TryC_k$ that returns $C_k$, $\ell_{op_k}$ is associated with line~\ref{alg:dag_tryc_wset}.
$<_H$ denotes a total-order on operations in the sequential history $H$.

\noindent\textit{Serialization points.}
The serialization of a transaction $T_j$, denoted as $\delta_{T_j}$ is
associated with the linearization point of a operation performed during the execution of the transaction.
We obtain a history ${\bar H}$ from $H$ as follows: 
for every transaction $T_k$ in $H$ that is complete, but not completed all its operations, 
we remove it from $H$. 

A complete sequential history $S$ equivalent to ${\bar H}$ is obtained by associating serialization points to transactions in ${\bar H}$ as follows: If $T_k$ is an update transaction that commits, then $\delta_{T_k}$ is $\ell_{\TryC_k}$. If $T_k$ is a read-only transaction in $\bar H$, then $\delta_{T_k}$ is assigned to the linearization point of the last read in $T_k$. Let $<_S$ denote a total-order on transactions in the sequential history $S$.

\begin{claim}
\label{cl:happenbefore}
$S$ is legal. 
\end{claim}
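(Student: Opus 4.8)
The plan is to prove legality by comparing, for each read in $S$, the value it returns against the latest write that precedes it in the serialization order. I would first isolate the structural fact that drives everything: conflicting transactions are serialized in \preset{} (index) order. Concretely, if $T_a$ and $T_b$ conflict with $a<b$, then $T_a$ is a conflicting predecessor of $T_b$, so $T_a$ lies in \textit{dependencies}$[b]$ and contributes to \textit{indegree}$[b]$. Since $\TryC_b$ blocks until \textit{indegree}$[b]=0$ (line~\ref{alg:dag_tryc_wait}) and \textit{indegree}$[b]$ is decremented only by $T_a$'s own commit, after its write-back (lines~\ref{alg:dag_tryc_wset}--\ref{alg:dag_tryc_indegree-1}), the write-back event of $T_a$ precedes that of $T_b$ in $<_E$. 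Because the serialization point of a committed updating transaction is its write-back (line~\ref{alg:dag_tryc_wset}), this gives $\delta_{T_a} <_E \delta_{T_b}$, hence $T_a <_S T_b$; the read-only case is handled by recalling that a read-only transaction is serialized at its last read's linearization point, which lies after the write-back of any conflicting committed writer it observes. I would package this as: for conflicting $T_a,T_b$, $T_a <_S T_b \iff a<b$.

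Next I would fix a read $\Read_k(X_j)\to v$ appearing in $S$ and split on whether $X_j\in\Wset(T_k)$. If it is, the read returns $T_k$'s own most recent staged write (lines~\ref{alg:dag_read_wset}--\ref{alg:dag_read_e}); since $T_k$ appears as a contiguous block in $S$ with that write preceding the read, $v$ is trivially the latest written value. Otherwise \emph{read\_lvp} returns the version $[ov_j,i]$ of highest index $i<k$ present in shared memory, and I would argue three things: (i) $T_i$ is committed and $ov_j$ is the value it wrote to $X_j$, since versions are created only at a commit write-back; (ii) $T_i <_S T_k$, because $T_i$ and $T_k$ conflict on $X_j$ with $i<k$, so the index-order characterization applies; and (iii) no writer of $X_j$ is serialized strictly between $T_i$ and $T_k$. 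Together, (i)--(iii) identify $T_i$ as the latest writer of $X_j$ before $T_k$ in $S$, establishing legality of this read.

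The main obstacle is step (iii), and it is where the DAG-scheduling discipline must be invoked. Suppose some committed $T_m$ writes $X_j$ with $T_i <_S T_m <_S T_k$. By the index-order characterization this forces $i<m<k$, so $T_m$ is a conflicting predecessor of $T_k$ and carries a DAG edge $T_m\to T_k$. I would then use the scheduler invariant that $T_k$ does not read $X_j$ until \textit{indegree}$[k]=0$, i.e.\ until every conflicting predecessor (in particular $T_m$) has completed its write-back; consequently the version $[\cdot,m]$ is already present in shared memory when \emph{read\_lvp} runs, so it would return an index $\ge m>i$, contradicting that it returned $i$. The delicate part is justifying this invariant precisely from the pseudocode: the read itself contains no explicit wait, so I would need to argue from the execution semantics that $T_k$ is dispatched only after its dependencies clear (as stated in the algorithm overview), or equivalently strengthen the commit-order argument so that every conflicting predecessor's write-back is ordered before $T_k$'s reads. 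Once (iii) is secured, applying the argument to every read of $S$ completes the proof.
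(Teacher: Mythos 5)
You have the right proof, and at its core it is the paper's proof: both arguments rest on (a) the indegree-based waiting, through which a conflicting predecessor's write-back (line~\ref{alg:dag_tryc_wset}) precedes the decrement that releases its successors, so the commits of conflicting transactions are ordered by \preset{} index, and (b) the maximality of \emph{read\_lvp}, which turns the presence of an intervening writer's version in shared memory into an immediate contradiction. What differs is the packaging. The paper supposes a committed $T_k$ writing $X$ with $T_i <_S T_k <_S T_j$ and performs an event-order case split: the case $\ell_{\TryC_i} <_E \ell_{\TryC_k} <_E \ell_{\Read_j(X)}$ is dismissed exactly as in your step (iii) (the higher-index version would have been returned), and the residual case $\ell_{\TryC_i} <_E \ell_{\Read_j(X)} <_E \ell_{\TryC_k}$ is eliminated by a further split on $T_j$ read-only versus updating, appealing to the wait at line~\ref{alg:dag_tryc_wait}. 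You instead isolate the characterization ``for conflicting pairs, $T_a <_S T_b \iff a<b$'' as a standalone lemma, collapse the intervening-writer case into the single observation that the version $[\cdot,m]$ must already be present when \emph{read\_lvp} executes, and handle the $X_j \in \Wset(T_k)$ case explicitly (the paper folds it into the definition of latest written value); this is a tidier decomposition of the same argument. Most importantly, the caveat you flag is on target and applies to the paper's proof as well: since \Read$_k$ in \Cref{algo:sdag} contains no wait, the needed fact that a transaction's reads follow every conflicting predecessor's write-back cannot be extracted from the \TryC{} wait alone --- the wait in the reader's own \TryC{} occurs after its reads, so it cannot order $\ell_{\Read_j(X)}$ after $\ell_{\TryC_k}$. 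The paper's closing step leans on precisely the dispatch discipline stated in prose in \Cref{sec:algo_overview} (execution of $T_k$ is deferred until every transaction outside \cSet{}($T_k$) completes), invoking it only loosely via the commit logic; making that scheduler invariant explicit, as you propose, is the correct way to close step (iii), and the paper's argument needs it no less than yours does.
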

\begin{proof}
First, observe that for every \Read$_j$($X$) $\to v$ in $E$, there exists some transaction $T_i$ that performs \Write$_i$($X,v$) and completes the shared memory write to $v$ during $\TryC_i$.

Consider a \Read$_j$($X$) that returns a response $v$ performed by a transaction $T_j$.
To prove that $S$ is legal, we need to show that,
there does not exist any
transaction $T_k$ that returns $C_k$ in $S$ and performs \Write$_k$\text{(}$X,v'$\text{)}; $v'\neq v$ such that $T_i <_S T_k <_S T_j$. We abuse notation here by assuming $T_i \rightarrow T_k \rightarrow T_j$.
Now, suppose by contradiction that there exists a committed transaction $T_k$, $X \in $\Wset($T_k$) that writes $v'\neq v$ to $X$ 
such that $T_i <_S T_k <_S T_j$.
Since $T_i$ and $T_k$ are both updating transactions (i.e., have non-empty write sets) that commit,
($T_i <_S T_k$) implies that ($\delta_{T_i} <_{E} \delta_{T_k}$) and
($\delta_{T_i} <_{E} \delta_{T_k}$) implies that ($\ell_{\TryC_i} <_{E} \ell_{\TryC_k}$).

Since $T_i$ and $T_k$ are both updating transactions that commit,
($T_i <_S T_k$) implies that ($\delta_{T_i} <_{E} \delta_{T_k}$) and
($\delta_{T_i} <_{E} \delta_{T_k}$) implies that ($\ell_{\TryC_i} <_{E} \ell_{\TryC_k}$).

Now observe that, since $T_j$ reads the value of $X$ written by $T_i$, one of the following is true: 
$\ell_{\TryC_i} <_{E} \ell_{\TryC_k} <_{E} \ell_{\Read_j\text{(}X\text{)}}$ or
$\ell_{\TryC_i} <_{E} \ell_{\Read_j\text{(}X\text{)}} <_{E} \ell_{\TryC_k}$.

Observe that the case that $\ell_{\TryC_i} <_{E} \ell_{\TryC_k} <_{E} \ell_{\Read_j\text{(}X\text{)}}$ is not possible. This is because the value of the \account{} $X$ will have been overwritten by transaction $T_k$ and $\Read_j\text{(}X\text{)}$ should have read the value written by $T_k$ (per \textit{read\_lvp()} ) and not $T_i$---contradiction.
Consequently, the only feasible case is that $\ell_{\TryC_i} <_{E} \ell_{\Read_j\text{(}X\text{)}} <_{E} \ell_{\TryC_k}$.
We now need to prove that $\delta_{T_{j}}$ indeed precedes $\delta_{T_{k}}=\ell_{\TryC_k}$ in $E$.

Now consider two cases:
(1) Suppose that $T_j$ is a read-only transaction. 
Then, $\delta_{T_j}$ is assigned to the last read performed by $T_j$.
If \Read$_j$($X$) is not the last read, then there exists a $\Read_j\text{(}X'\text{)}$ such that 
$\ell_{\Read_j\text{(}X\text{)}} <_{E} \ell_{\TryC_k} <_E \ell_{read_j\text{(}X'\text{)}}$. 
(2) Suppose that $T_j$ is an updating transaction that commits, then $\delta_{T_j}=\ell_{\TryC_j}$ which implies that
$\ell_{read_j\text{(}X\text{)}} <_{E} \ell_{\TryC_k} <_E \ell_{\TryC_j}$. 
Both cases lead to a contradiction. This is because, the commit logic for transaction $T_k$ (lines~\ref{alg:dag_tryc_s} to \ref{alg:dag_tryc_e}). This forces transactions to wait until all dependencies are satisfied, that is, all preceding transactions in the DAG on which $T_k$ depends are committed and $\textit{indegree}[k] := 0$ (line~\ref{alg:dag_tryc_wait}). This would violate the assumption about the return value of \Read$_j$($X$).
\end{proof}
Finally, to complete the proof, we observe that 
if $T_i \to T_j$, then $T_i <_S T_j$.
This follows immediately from the assignment of serialization points. 
}


\paragraph*{Leveraging Conflict Specification in Optimized \BTM{} (\saBlockSTM{})}
The \saBlockSTM{} algorithm optimizes the \BSTM{} scheduler by integrating independence information (i.e., $\Cset{}$) associated with transactions, which is assumed to be complete and provided as input. The scheduler is modified to allow transactions that are independent of all preceding transactions in the \preset{} order to execute without validation. A greater number of such independent transactions improves the execution efficiency. This method builds on the existing scheduler of \BSTM{}~\cite{blockstm} for MoveVM and \PEVM{}~\cite{pevm} developed by the RISE chain~\cite{riselabs} for EVM. 
In the absence of conflict specifications, the algorithm defaults to optimistic execution as in \PEVM{} and \BlockSTM{}.

The design of \saBlockSTM{}, shown in \Cref{algo:saBlockSTM}, leverages the conflict specifications of transactions to enable efficient optimistic execution by minimizing the validation overhead of vanilla \PEVM{} and \BSTM{}. Consider a transaction $T_k$ being executed by a process $p_k$.

\begin{algorithm}[!tb]
\footnotesize
\caption{\saBlockSTM{}[1,\ldots, n]: The approach optimizes the \BlockSTM{} scheduler to allow transactions that are independent of all previous transactions to execute without validation. Consider a transaction $T_k$ being executed by a process $p_k$. 
}
\label{algo:saBlockSTM}
\KwIn{${T}$: list of transactions in the block $B_i$; ${S}$: pre-state$-$ state before execution of block $B_{i}$; $\cSet{}$: specifications for transactions in $T$.
}
\KwData{$\cSet{(T_k)}$: The set of transactions $T_i$ such that $i < k$ and $T_k$ is independent of $T_i$.}

\noindent
\begin{minipage}[t]{.4\textwidth}
\LinesNotNumbered
\SetNlSty{}{}{}

\SetKwProg{Pn}{Fun}{:}{}
\SetKwFunction{readk}{\texttt{\Read{$_k$}}}
\Pn{\readk{X$_j$}}{
    \nlset{2} \If{X$_j$ $\not\in$ \Wset(T$_k$)}{
        \tcp{Read the largest version of $X_j$ created by a tx $T_i$ preceding $T_k$.}
        
        \nlset{3} [ov$_j$, i] $:=$ \emph{read\_lvp}(T$_k$, X$_j$) \label{alg:sabstm_read_lvp}

        \nlset{4} \If{[ov$_{j}$, i].is\_estimate}{
        
           \nlset{5} \If{$\neg$\,\texttt{add\_dependency}(T$_k$, T$_i$)}{\label{alg:sabstm_read_dep}
            
                \nlset{6} \texttt{retry}(T$_k$)\label{alg:sabstm_read_retry}
            }
            \tcp{If T$_k$ read an estimated value written by $T_i$}
            
            \nlset{7} \Return $A_k$
        }
        \nlset{8} \Rset{(T$_k$)} $:=$ \Rset{(T$_k$)} $\cup$ \{X$_j$, [ov$_j$, i]\}\label{alg:sabstm_read_rset}
        
        \nlset{9} \Return ov$_j$
    }
    \nlset{10} \Else{
    
        \nlset{11} [ov$_j$, $\bot$] $:=$ \Wset{(T$_k$)}.$\lit{locate}$(X$_j$) \label{alg:sabstm_read_local}
        
        \nlset{12} \Return ov$_j$
    }
}

\ignore{
\nlset{13} \Pn{\lvp{T$_k$, X$_j$}}{

    \nlset{14} [ov, i] $:=$ [0, 0]
    
    \tcp{Traverse version list of $X_j$ to read the largest version created by a $T_i$ preceding $T_k$.}
    
    \nlset{15} \ForAll{[ov$_j$, i] $\in$ X$_j$}{
    
        \nlset{16} \If{$k > i$}{
        
            \nlset{17} [ov, i] $:=$ [ov$_j$, i]
        }
    }
    \nlset{18} \Return{[ov$_j$, i]}
}
}
\SetKwFunction{writemvk}{\Write$_k$}
\SetKwProg{Pn}{Fun}{:}{}
\nlset{13} \Pn{\writemvk{$X_j, v$}}{\label{alg:sabstm_write_s}

    \nlset{14} ov$_j$ $:= v$

    \nlset{15} \If{X$_j$ $\not\in$ \Wset(T$_k$)}{\label{alg:sabstm_write_wset_check}
    
        \nlset{16} \Wset{(T$_k$)} $:=$ \Wset(T$_k$) $\cup$ \{X$_j$, [nv$_j$, k]\} \label{alg:sabstm_write_insert}
    }
    \nlset{17} \Else{
    
        \tcp{$X_j$ is in $\Wset(T_k)$, update current value to $v$.}
        
        \nlset{18} \Wset{(T$_k$)} $:=$ \Wset{(T$_k$)}.$\lit{update}$(X$_j$, [nv$_j$, k])\label{alg:sabstm_write_update}
    }
    \nlset{19} \Return{$ok$} \label{alg:sabstm_write_ret}
}
\end{minipage}%
\hfill
\begin{minipage}[t]{.46\textwidth}
\LinesNotNumbered
\SetNlSty{}{}{}
\SetKwFunction{TryCk}{\tryC{$_k$}}
\SetKwProg{Fn}{Fun}{:}{}
\nlset{20}\Fn{\TryCk()}{
    \nlset{21} \If{$\exists$ T$_i$ $\notin$ \cSet{(T$_k$)} : $i<k$}{\label{alg:sabstm_tryc_cset_validation}
    
        \nlset{22} \ForAll{X$_j$ $\in$ \Wset{(T$_k$)}}{

            \tcp{Update \Wset{($T_k$)} in shared memory}
            \nlset{23} \Write{(X$_j$, [nv$_j$, k])} \label{alg:sabstm_tryc_write}
        }

        \tcp{Read set validation}
        \nlset{24} \If{$\exists$ X$_j \in$ \Rset{(T$_k$)}: [ov$_j$, k] $\neq$ {read\_lvp}(T$_k$, X$_j$)}{\label{alg:sabstm_tryc_validate}

            \tcp{On validation failure, versions of each $X_j\in\Wset(T_k)$ are marked as estimated in shared memory.}
            \nlset{25} \ForAll{X$_j$ $\in$ \Wset{(T$_k$)}}{

                \nlset{26} mark\_estimate({nv}$_{j}$, k)\label{alg:sabstm_tryc_mark}
            }
            \nlset{27} \Return $A_k$\label{alg:sabstm_tryc_abort}
        }
        \nlset{28} \Return $C_k$\label{alg:sabstm_tryc_commit1}
    }
    \nlset{29} \Else{
        \tcp{$T_k$ is an independent transaction.}
        
        \nlset{30} \ForAll{$X_j \in Wset(T_k)$}{\label{alg:sabstm_tryc_wset}
        
            \nlset{31} \Write(X$_j$, [nv$_j$, k])\label{alg:sabstm_tryc_update_sm}
        }
        \tcp{Skip validation for $T_k$ and commit.}
        
        \nlset{32} \Return $C_k$\label{alg:sabstm_tryc_commit2}
    }
}

\SetKwFunction{addDependency}{$add\_dependency$}
\SetKwProg{Fn}{Fun}{:}{}
\nlset{33}\Fn{\addDependency{id, blocking\_id}}{

    \nlset{34} \If{txn\_status[blocking\_id] == Executed}{

        \nlset{35} \Return $\false$
    }
    \nlset{36} \Return $\true$
}
\end{minipage}
\end{algorithm}

\noindent\textbf{Implementation state.} For each \emph{\account{} $X_i$}, the algorithm maintains a memory location $v_i$ that stores a set of tuples $([v_1, k], [v_2, k'], \ldots)$, where each tuple $[v, k]$ represents a value $v$ written to $X_i$ by $T_k$.

\noindent\textbf{Read implementation.} The \textit{\Read{$_k(X_j)$}} function returns the most recent committed value of $X_j$ visible to transaction $T_k$. If $X_j \notin \Wset(T_k)$, the latest version written by a predecessor $T_i$ is read from shared memory at line~\ref{alg:sabstm_read_lvp} using function \textit{read\_lvp()} (as discussed in \saSupraSTM{} algorithm). If the version is an estimate and $T_i$ has not yet committed, \textit{add\_dependency} is invoked to check if $T_k$ must wait (cf. line~\ref{alg:sabstm_read_dep}); if so, $T_k$ retries at line~\ref{alg:sabstm_read_retry}. Otherwise, the version is added to the read set $\Rset(T_k)$ for subsequent validation at line~\ref{alg:sabstm_read_rset}. Otherwise, if $X_j \in \Wset(T_k)$, meaning  it is already in transaction's write set, the value is returned directly from \Wset{($T_k$)} at line~\ref{alg:sabstm_read_local}.

\noindent\textbf{Write implementation.} The \textit{\Write}$_k(X_j, v)$ performs a write that assigns a value $v$ to state $X_j$ by transaction $T_k$. It first checks whether $X_j \notin \Wset(T_k)$, if so, a new entry $[v, k]$ is inserted into $\Wset(T_k)$ (line~\ref{alg:sabstm_write_insert}); otherwise, the existing entry is updated with the new value $v$ (line~\ref{alg:sabstm_write_update}). This operation does not update in shared memory; instead, it maintains a local view of writes until commit (line~\ref{alg:sabstm_write_ret}).

\noindent\textbf{Commit implementation.} The \textit{\TryC{$_k$()}} function attempts to commit transaction $T_k$. If $T_k$ depends on any predecessor transaction(s) in \preset{}, it first updates its $\Wset(T_k)$ to shared memory (a multi-version data structure) at line~\ref{alg:sabstm_tryc_write} and then validates its read set $\Rset(T_k)$ against the latest values committed to shared memory at line~\ref{alg:sabstm_tryc_validate}. If 
validation fails, the versions written by $T_k$ in the multi-version structure are marked as estimated at line~\ref{alg:sabstm_tryc_mark}, and $T_k$ aborts by returning $A_k$ at line~\ref{alg:sabstm_tryc_abort}. If validation succeeds, or if $T_k$ is an independent transaction (its $\Wset(T_k)$ is committed to shared memory and validation is skipped), the function commits $T_k$ by returning $C_k$ at line~\ref{alg:sabstm_tryc_commit1} and line~\ref{alg:sabstm_tryc_commit2}, respectively.

}   
  {\begin{description}
    \item [Specification-Aware \SSTM{} (\sdag{})]  This approach utilizes the conflict specifications of the transactions and the \preset{} order to create a directed acyclic graph (DAG), which is used as a partial order and serves as input for the scheduler-less execution algorithm. In the DAG, transactions are represented as vertices, whereas conflicts among transactions are denoted as directed edges. The \emph{indegree} field is added with each vertex (transaction) to track dependencies with prior transactions in the \preset{} order; a transaction becomes eligible for execution when its indegree is zero. In \( C\text{set}() \), if a transaction has its independence set to have all transactions prior to it in \preset{} order, it contains no conflicts and will not have edges from any prior transactions, resulting in an indegree of zero. During execution, the non-zero indegree transactions wait for the previous transactions to commit and remove dependencies. A transaction for which execution is completed decreases the indegree of all dependent transactions, allowing them to be executed. A custom parallel-queue executor is developed to manage transaction execution and allow efficient parallel execution through multiple threads.

    \item [Specification-Aware \BSTM{} (\SHySTM{})] This approach optimizes the \BSTM{} scheduler by integrating independence information (i.e., \( C\text{set}() \)) associated with transactions, which is assumed to be provided as input. The scheduler is modified to allow transactions that are independent of all preceding transactions in the \preset{} order to execute without validation. A greater number of such independent transactions enhances the execution efficiency. This method builds upon the existing scheduler in \PEVM{}~\cite{pevm} for EVM and \BSTM{}~\cite{blockstm} for MoveVM, incorporating modifications to effectively utilize set-independence information. This approach works exactly as an optimistic speculative executor (the same as \PEVM{} and \BSTM{}) when the independence (conflict) information for all the transactions is unknown. 
\end{description}
}  


%


\section{Implementation and Evaluation of \BTM{} from Conflict Specifications}\label{Sec:experiments}
In this section, we analyze the execution latency and throughput of \saSupraSTM{} and \saBlockSTM{}, compared to baseline sequential execution and state-of-the-art parallel execution techniques.

\subsection{Implementation}
\noindent\textbf{EVM Implementation.} We compare \dBTM{} and \oBTM{} with \PEVM{}~\cite{pevm}, a version of \BlockSTM{}~\cite{blockstm} for the EVM developed by the RISE chain~\cite{riselabs}. Experiments are conducted on {REVM}~\cite{revm} version~\texttt{12.1.0}, an EVM implementation written in Rust. We used the core data structures provided by \PEVM{} version~\texttt{0.1.0} (commit hash \texttt{f0bdb21}) and implemented our algorithms on top of them. Specifically, we used a hash map to store the \cSet{}, which allowed us to store both the list of dependent transactions and the indegree of each transaction in the DAG. A custom parallel-queue is implemented to manage and enable efficient parallel execution across multiple threads.
%

We analyzed the performance on both synthetic workloads and historical blocks from Ethereum's mainnet. In the EVM, each transaction pays a gas fee to the \emph{Coinbase} account, which belongs to the block proposer. As a result, every transaction updates the {Coinbase} account, leading to a 100\% write-write conflict with all preceding transactions. To address this, we defer the fee transfer; that is, the gas fees are locally collected per transaction and credited to the {Coinbase} account only at the end, after all transactions in the block have been executed.

\noindent\textbf{MoveVM Implementation.}\label{sec:movevm}
We analyzed the performance of our \saSupraSTM{} and \saBlockSTM{} approaches against the baseline \BSTM{}~\cite{blockstm}, which runs DiemVM (an earlier version of MoveVM) using the test setup from~\cite{blockstm-setup}. The evaluation includes two primary workloads: (i) peer-to-peer (P2P) transfers, as provided in the original test setup~\cite{blockstm-setup}, and (ii) synthetic workloads (batch transfers and a generic mixed workload) we designed to mimic diverse execution patterns. 

\noindent\textbf{Experimental Setup.} We ran our experiments on a single-socket AMD machine consisting of 32~cores, 64~vCPUs, 128~GB of RAM, and running Ubuntu~18.04 with 100~GB of SSD.
The experiments are carried out in an execution setting of in-memory using REVM and MoveVM, and do not account for the latency of accessing states from persistent storage. 
We ran the experiments 52 times, with the first 2 runs designated as warm-up. Each data point in the plots represents average throughput (tps) and latency (ms), 
where each execution is repeated 50 times.

\subsection{EVM Analysis}\label{subsec:evm_analysis}
\Cref{fig:exp-syn-erc20} to~\Cref{fig:exp-e2o} show the relationship between varying threads or varying conflicts ($\alpha-$tail distribution factor) within the block, represented on the X-axis, on two important performance metrics: \emph{throughput (tps)}, represented by histograms on the primary Y-axis (Y1), and \emph{execution latency (ms)}, as line graphs on the secondary Y-axis (Y2). 

\paragraph*{Construction and Testing of Synthetic Workloads}
The experiments are performed on synthetic workloads to analyze throughput and latency under various execution conditions. A synthetic workload is a computational workload that does not depend on user input but instead simulates transaction types, concurrency patterns, or system loads. 

\input{figs/plot-EVM-synt-workloads}

We analyzed Ethereum's Mainnet blocks to generate synthetic workloads and found that certain accounts (or smart contracts) are predominantly accessed, indicating that historical blocks exhibit a tail distribution in address access frequency. Based on this observation, we designed synthetic workloads that reflect varying characteristics of transaction conflicts in blocks. Following the tail distribution, we evaluated performance across synthetic workloads, including worst-case and best-case scenarios, enabling a comprehensive assessment of system behavior under varying levels of conflict within a block. 
For this, we added another metric on the X-axis, the parameter $\alpha$, which represents the heaviness of the tail in the Pareto distribution~\cite{wikipediaPareto}. The value of $\alpha$ determines the degree of conflict, as shown in subfigure (b); smaller values of $\alpha$ correspond to fewer conflicts.

    \noindent
    \textbf{1. \ERC{} Workload (W$_{erc20}$):} This workload simulates multiple \ERC{} smart contracts, with each contract representing a distinct cluster. Transfer transactions occur among addresses within each cluster, each of which has its own \ERC{} token. 
    The number of transactions per cluster is determined using a Pareto distribution~\cite{wikipediaPareto}, while the sender and receiver addresses are selected uniformly within each cluster. This approach captures the distribution of transaction loads, resulting in certain clusters receiving higher transactions. The conflict specifications are derived using the sender, receiver, and contract addresses accessed by transactions in the block.
    
    For testing, we generate 10k \ERC{} transfer transactions across 10k contract addresses (clusters), with each cluster mapped to a unique {externally owned account (EOA)} address. Since each contract has a single EOA, transactions are effectively self-transfers. However, an EOA can initiate multiple transactions within a cluster, leading to conflicts, with a tail factor of $\alpha = 0.1$.
    
    As shown in~\Cref{fig:exp-syn-erc20}a, increasing thread count improves performance. In fact, \saSupraSTM{} achieves peak throughput at 24 threads with 411k~tps, compared to 178k~tps and 171k~tps for \saBlockSTM{} and \PEVM{}, with an average throughput of 299k, 157k and 143k~tps, respectively. In particular, \saSupraSTM{} achieves a 4$\times$ speedup over sequential and a speedup of 2.4$\times$ over \PEVM{}. 
    
    Latency trends show that optimistic approaches initially reduce execution time, but later increase it due to higher abort rates. In contrast, \saSupraSTM{} minimizes aborts by leveraging conflict specifications, reducing both abort and (re-)validation costs. Similarly, in~\Cref{fig:exp-syn-erc20}b, with 16 threads, increasing block conflicts leads to a steady performance decline, which is more pronounced in optimistic approaches due to higher abort rates and increased waiting in \saSupraSTM{}. At $\alpha = 5$ (maximum conflicts), all approaches incur a higher overhead and perform worse than sequential execution.

    \noindent
    \textbf{2. Mixed Workload (W$_m$):} This workload combines 50\% native ETH transfers and 50\% \ERC{} transfers, ensuring a balanced representation of both types of transactions. 
    For ETH transfers, sender and receiver addresses are selected using a Pareto distribution~\cite{wikipediaPareto}. The number of \ERC{} contracts is set to 25\% of the block size, with workload generation similar to \emph{W$_{erc20}$}. As a result, 2.5k contracts (clusters) comprise a block of 10k transactions. 
    
    As shown in~\Cref{fig:exp-syn-mix-vbs32}, throughput and latency remain consistent with W$_{erc20}$ workload. However, due to 50\% ETH transfers (microtransactions) in the block, all approaches show a noticeable increase in throughput. The average tps is 356k for \saSupraSTM{}, 238k for \saBlockSTM{}, 229k for \PEVM{}, and 157k for sequential. This translates to speedups of 2.27$\times$, 1.52$\times$, and 1.46$\times$ for \saSupraSTM{}, \saBlockSTM{}, and \PEVM{}, respectively, over sequential execution. The maximum tps of \saSupraSTM{} increased from 411k to 454k.
\paragraph*{Testing on Real-World Ethereum Transactions}
We selected blocks from different historical periods based on major events that may have impacted Ethereum's concurrency and network congestion. Each of these blocks allows us to analyze performance in different historical periods, providing insight into how major events, such as popular \dApp{} launches and significant protocol upgrades, affect transaction throughput and network latency. Consequently, this also helps us understand the limitations of parallel execution approaches under different network conditions, such as the gradual increase in conflicts in the block and the change in resource requirements over time. 

We trace the accessed states of transactions within a historical block using \texttt{callTracer} and \texttt{prestateTracer} APIs~\cite{chainstackTraceblock}, which provide a complete view of the block's pre-state, the state required for executing the current block. To derive conflict specifications for transactions, the pre-state file is parsed to identify all EOAs and smart contract addresses accessed within the block. These accessed addresses form the basis for constructing the conflict specification and serve as the ``ground truth''. Note that to derive specifications for a transaction T$_k$, all transactions (their pre-state EOAs and smart contract addresses) preceding it in \preset{} order are considered.

Note that real historical Ethereum blocks contain significantly fewer transactions than to our synthetic tests. To ensure a representative evaluation, we selected larger blocks from different historical periods. Specifically, we analyzed two blocks each from the CryptoKitties contract deployment and Ethereum~2.0 merge periods. We also analyzed two recent blocks with specific characteristics to demonstrate the worst-case and the best-case parallelism.

\input{figs/plot-EVM-historical-workload}

    \noindent
    \textbf{1. CryptoKitties Contract Deployment (W$_{ck}$):} The famous CryptoKitties~\cite{cryptokitties} contract was deployed on block 4605167, after which an unexpected spike in transactions caused Ethereum to experience high congestion. We analyze block 4605156, which occurred before the contract deployment, and block 4605168, which took place after the deployment. 
    
   \Cref{fig:exp-ck}a demonstrates a significant speedup in parallel transaction execution, while~\Cref{fig:exp-ck}b highlights congestion in the block following contract deployment. In \saSupraSTM{}, conflicts are over-approximated due to sender and contract address-level conflict specifications, as real blocks lack read-write and contract storage access-level granularity. This leads to a performance drop compared to synthetic tests. With complete conflict specifications and larger blocks, \saSupraSTM{} could achieve superior performance, as observed in synthetic tests.  That said throughput increases with additional threads, peaking at 8 threads, indicating the optimal point for speedup. 
    Notably, \saBlockSTM{} outperforms other approaches in this workload both before and after contract deployment. The average tps for \saBlockSTM{}, \PEVM{}, and \saSupraSTM{} drops from 69k, 68k, and 59k, before contract deployment to 54k, 53k, and 33k after contract deployment. 
    Additionally, despite smaller blocks in the pos-tcontract deployment, execution time increases for both sequential and other executors, highlighting the impact of congestion on overall performance.

    \noindent
    \textbf{2. Ethereum~2.0 Merge (W$_{e2}$):} The merge~\cite{ethereum2merge} took place in block number 15537393, which changed Ethereum's consensus to proof-of-stake along with other protocol-level changes and had an impact on transaction processing, block validation and network traffic in general. We analyze block number 15537360 before the merge and 15537421 after the merge. Compared to the W$_{ck}$ historical period, the Ethereum network had experienced an increase in block size and user activity. As shown in~\Cref{fig:exp-e2o}, the average throughput for \saBlockSTM{}, \PEVM{}, and \saSupraSTM{} before the merge was 55k (2.68$\times$ speedup over sequential), 56k (2.73$\times$ speedup), and 33k (1.63$\times$ speedup), respectively. Post-merge, these values changed to 56k (3.17$\times$ speedup), 53k (2.98$\times$ speedup), and 23k (1.29$\times$ speedup). Notably, the increased speedup over sequential execution post-merge shows the growing parallel execution potential to improve throughput.

    \noindent
    \textbf{3. Worst-case and Best-case Blocks (W$_{wb}$):} To analyze performance extremes, we selected Ethereum blocks exhibiting different conflict profiles. The worst-case block, 17873752, represents maximal conflicts with densely interconnected transactions and consists of 1189 transactions, of which 1129 are ETH transfers, 21 are \ERC{} transfers, and 39 are other contract transactions. We observed that most ETH transfers interact with the same address, resulting in a longer dependency chain of length 1104. As a result, the block becomes inherently sequential, and hence a good choice for worst-case analysis. In contrast, the best-case block, 17873654, consists predominantly of independent transactions. There are a total of 136 transactions, including 41 ETH transfers, 18 \ERC{} transfers, and 77 other contract transactions. We chose this block because it has a significant number of internal transactions (435) and the length of the longest conflict dependency chain is 50.

    As shown in \Cref{fig:exp-wb_case}a, in the worst-case, for \saSupraSTM{} the tps remains consistently low, around 100k-137k across all thread counts and 166k-195k for \saBlockSTM{}. This reflects minimal improvement over the sequential tps of 162k, highlighting the limitation of parallel execution under blocks with high access-level dependencies. While \PEVM{} at 4 threads shows 1.41$\times$ performance gains over sequential, increasing the threads beyond introduces aborts and re-executions, resulting in overhead. However, the performance gain could be due to the lazy update of ETH transfers, which defers balance changes to EOAs until a later stage at commit time, as well as to independent contract transactions that are computationally intensive compared to ETH transfers$-$microtransactions with minimal computational logic. Due to the overhead of DAG construction, \saSupraSTM{} shows a noticeable performance degradation. 
    Moreover, 94.95\% of the transactions in the block are ETH transfers resulting in smaller execution latency despite the large block size. The average execution latency for \PEVM{} is 6.36~ms, \saBlockSTM{} is 6.45~ms, that are not significantly less compared to 7.36~ms of sequential execution, while \saSupraSTM{} incurs slightly higher latency 9.97~ms.

    In contrast, the best-case block in \Cref{fig:exp-wb_case}b shows that \saBlockSTM{} achieves throughput in the range of 3.9k-6.8k~tps, while \saSupraSTM{} ranges from 6.9k-7.2k~tps, both significantly outperforming 3.74k~tps of the sequential execution. The average execution latency across threads is 27.1~ms for \saBlockSTM{} and 19.2~ms for \saSupraSTM{}, compared to 36.36~ms for sequential and 27.25~ms for \PEVM{}. In terms of average speedup, \saBlockSTM{} achieves a 1.34$\times$ improvement over sequential and performs on par with \PEVM{}, while \saSupraSTM{} shows superior efficiency, with an average speedup of 1.89$\times$ over sequential and a 1.42$\times$ speedup over \PEVM{}. Since, this block contains a higher proportion of smart contract transactions with numerous internal calls, making it computationally intensive for the VM. Consequently, minimizing aborts and re-executions yields tangible performance benefits. Notably, \saSupraSTM{} maintains its competitive edge due to its reduced abort and re-execution overheads.

Observe that from one historical period to another (even with just these blocks), despite the block size increasing, the overall throughput has decreased, highlighting the impact of congestion and resource requirements on overall performance. 

\noindent\emph{\textbf{Summary.} Our EVM analysis highlights the strong performance of \saSupraSTM{} and \saBlockSTM{}, demonstrating their potential to achieve a near-theoretical maximum parallel execution by leveraging conflict specifications. With more accurate conflict specifications in historical workloads, we can expect \saSupraSTM{} and \saBlockSTM{} to achieve even higher speedups over sequential and \PEVM{}. However, in the absence of conflict specifications, \saBlockSTM{} is expected to perform the same as the optimistic execution of \PEVM{}. On average, across synthetic workloads, \saSupraSTM{} achieves a maximum of $4\times$ speedup over sequential and $3.82\times$ over \PEVM{}, while \saBlockSTM{} achieves $1.8\times$ and $1.2\times$ speedups, respectively. In contrast, across historical blocks, \saSupraSTM{} achieves an average speedup of $1.22\times$ over sequential and a maximum of $1.7\times$ over \PEVM{}, while \saBlockSTM{} achieves an average of $1.83\times$ over sequential and a maximum of $1.41\times$ over \PEVM{}.}

\subsection{MoveVM Analysis}\label{subsec:movevm_analysis}
\revision{
We evaluate and compare the performance of our \saSupraSTM{} and \saBlockSTM{} against the baseline \BSTM{}~\cite{blockstm} running DiemVM (a specific version of MoveVM) for execution in test-setup at~\cite{blockstm-setup}. The evaluation includes two primary workloads: a peer-to-peer (P2P) transfer workload, as available in the original test setup~\cite{blockstm-setup}, and a set of synthetic smart contract workloads designed to mimic different execution patterns. 

\input{figs/plot-MoveVM-synt-p2p}

\noindent
\textbf{MoveVM Analysis.} 
To demonstrate this, \Cref{fig:exp-movevm-p2p} describes our results for the implementation of \saSupraSTM{} and \saBlockSTM{} on the MoveVM when compared against the state-of-the-art \BSTM{} and sequential execution. We perform these experiments on the \BSTM{} benchmark test-setup~\cite{blockstm-setup}, which runs a virtual machine for smart contracts in the Move language~\cite{blackshear2019move}. Performance (throughput as histogram on the Y1-axis and latency as line chart on the Y2-axis) is tested on P2P transactions, which essentially translate to smart contract transactions in the Move ecosystem. 

\noindent
\textbf{1. P2P Transfer Workload (W$_{p2p})$:}
In a block of 10k transactions, every P2P transaction randomly chooses two different accounts and transfers a fixed amount between them. The access specifications are derived by analyzing the sender and receiver of each transaction with all transactions preceding it in \preset{} order. The conflicts in the block are determined by the number of accounts; specifically, when there are only two accounts, the load is inherently sequential (each transaction depends on the one prior to it). Performance is tested on varying threads (X-axis) on two different account setups: 100 (high contention,~\Cref{fig:exp-movevm-p2p}a) and 10k (less contention,~\Cref{fig:exp-movevm-p2p}b). We also tested performance on varying accounts (X-axis) with fixed threads (to 16 and 32), as shown in~\Cref{fig:exp-movevm-p2p}c and~\Cref{fig:exp-movevm-p2p}d. 

As shown in~\Cref{fig:exp-movevm-p2p}, the results reveal that \saSupraSTM{} and \saBlockSTM{} have throughput that is comparable to \BSTM{} and are significantly better with 16 and 32 threads with 100 accounts for \saSupraSTM{} in~\Cref{fig:exp-movevm-p2p}a. As shown, \saSupraSTM{} attains a maximum throughput of 54k at 16 threads, while \saBlockSTM{} and \BSTM{} attain a maximum throughput of 46k at 32 threads. The average tps is 34.8k, 29.99k and 29.89k for \saSupraSTM{}, \saBlockSTM{} and \BSTM{}, respectively. 

Similarly, in the other three experiments, \saSupraSTM{} outperforms other approaches, demonstrating the advantageous effects of knowing the conflict specification in advance. Note that with just two accounts for P2P transactions, it will be a 100\% conflict case with a linear chain of conflicts, as shown in~\Cref{fig:exp-movevm-p2p}c and~\Cref{fig:exp-movevm-p2p}d. In this case, due to overhead, sequential execution outperforms parallel execution approaches. However, the tps increases linearly with accounts, until all transactions become independent, as can be seen that from 1k to 10k, the tps is almost saturated for both 16 threads (\Cref{fig:exp-movevm-p2p}c) and 32 threads (\Cref{fig:exp-movevm-p2p}d). 

\input{figs/plot-MoveVM-synt-workloads}

\noindent
\textbf{2. Batch Transfer (W$_{batch}$) and Generic Workload (W$_{gw}$):}
To evaluate performance beyond P2P transactions, we designed a batch transfer workload in which a single account transfers funds to multiple other accounts. In this workload, to follow the account access distribution in real-world workloads, we used a tail distribution, similar to the synthetic workload of the EVM analysis in~\Cref{fig:exp-syn-mix-vbs32}. For this, for a block size of 10k and 10k accounts, when varying threads (X-axis), we tested performance under a fixed $\alpha = 0.1$ (\Cref{fig:exp-syn-move-batch-trans}a), as well as we evaluated performance on 16-thread count with varying $\alpha$ on X-axis, as shown in \Cref{fig:exp-syn-move-batch-trans}b. Moreover, we also tested performance on a generic workload (\Cref{fig:exp-syn-movevm-generic}) in the same setting, where the block consists of an equal number of P2P and batch transfers.

As shown in \Cref{fig:exp-syn-move-batch-trans}, in low-conflict settings ($\alpha = 0.1$), both \BlockSTM{} and \saBlockSTM{} marginally outperform \saSupraSTM{}. Among them, \saBlockSTM{} achieves the highest throughput of 49k~tps at 32 threads and begins to outperform \BlockSTM{} from 8 threads onward. However, as $\alpha$ increases and conflicts become more frequent, the performance of all approaches degrades. In these high-conflict settings, \saSupraSTM{} begins to outperform other approaches from $\alpha \geq 0.25$, though its maximum throughput at $\alpha = 0.1$ remains 33k.

In the case of the generic workload, as shown in~\Cref{fig:exp-syn-movevm-generic}, \saSupraSTM{} initially performs better than \BlockSTM{}, but as the thread count increases, \BlockSTM{} and \saBlockSTM{} start outperforming, with \saBlockSTM{} showing a slight edge. Upon increasing $\alpha$ further, the performance of all approaches drops and \saSupraSTM{} outperforms the others beyond $\alpha = 0.25$. At $\alpha = 5.0$, all approaches perform close to the sequential baseline. 
To improve on the worst case and to get more balanced performance in general (maximum throughput in all cases), we believe a workload-adaptive parallel execution approach is the best path forward (that is briefly discussed in~\Cref{sec:conc}).

\noindent\emph{\textbf{Summary.} Our MoveVM analysis highlights the strong performance of \saSupraSTM{} and \saBlockSTM{}. It shows that conflict specifications in conjunction with optimistic execution via \saBlockSTM{}, where the scheduler is aware of transactional dependencies in advance minimize the overhead of aborts, resulting in improved performance relative to optimistic execution of \BlockSTM{}.}

}
\section{\BTMemory{} for EVM}\label{sec:spec}
%
Having established the empirical advantages of \BTM{'s} leveraging conflict specifications, we now present the design of a \emph{full} \emph{conflict analyzer} integrated implementation of the EVM parallel execution. 
We demonstrate with the conflict analyzer that it is possible to algorithmically implement \emph{sound} conflict specifications for real Ethereum blocks (cf. \Cref{subsec:conf_spec}). The derived conflict specifications, though they might be \emph{incomplete}, i.e., they may not identify conflict relations for some class of transactions, can be generated efficiently and thus prove consequential for maximizing transaction execution throughput (as established in~\cref{sec:algo_overview}). 
Additionally, we present results for a 
workload adaptive execution leveraging the conflict specifications 
for EVM in~\Cref{subsec:adaptive}. 

%
\label{spec:formalism}
\subsection{Conflict Specifications for EVM}\label{subsec:conf_spec}
\noindent\textbf{Preliminaries.}
We now detail some technical preliminaries needed to describe the approach for building conflict specifications in EVM.
%
\begin{definition}
Ethereum blockchain is made up of a set of \textbf{accounts} $A = A_U \cup A_C$, where $A_U$ is the set of EOAs and $A_C$ is the set of contract accounts. Each account $a \in A$ is associated with some data $a.\texttt{balance}$ such that $a.\texttt{balance} \in a.\texttt{data}$ where $a.\texttt{balance}$ is the ETH balance. For an EOA $a \in A_U$, $a.\texttt{data} = \{a.\texttt{balance}\}$ whereas a contract account usually has fields associated in $a.\texttt{data}$.
\end{definition}
\begin{definition}
An Ethereum transaction is specified by a tuple $(\texttt{origin}, \texttt{dest}, \texttt{value}, \texttt{calldata})$ where $\texttt{origin} \in A_U$ is the initiator of the transaction, $\texttt{dest} \in A$ is the transaction recipient, $\texttt{value} \in \mathbb{N}$ is the number of wei sent by \texttt{origin} to \texttt{dest}, $\texttt{calldata}$ is the accompanying data. An Ethereum transaction $T$ is a simple ETH payment if $T.\texttt{dest} \in A_U$ and is a contract transaction otherwise. For the read/write sets it always holds that \Rset(T), \Wset(T) $\subseteq \cup_{a \in A}a.\texttt{data}$.
\end{definition}

\begin{note}
A real-world Ethereum transaction contains more elements, but for the purpose of overapproximating read/write sets here, such tuple is a sufficient enough characterization.
\end{note}

\begin{definition}
For a contract transaction $T$, the \textbf{signature} of $T$ denoted by $T.\texttt{sig}$ is the first four bytes of $T.\texttt{calldata}$ that specify the function to call in the contract ($T.\texttt{dest}$).
\end{definition}
\begin{example}
Consider a situation where an EOA $a \in A_U$ initiates a transaction $T_1$ calling the function \texttt{receiveMessage} of a \emph{Contract b} (cf. \cref{lst:cap0}) so the corresponding tuple is: \((\texttt{origin}: a, \texttt{dest}: b, \texttt{value}: 1, \texttt{calldata}: \{\texttt{signature}: \texttt{receiveMessage}, \texttt{arg}: ``hello"\})\).
We can see here, that always $\Rset(T_1) = \{a.\texttt{balance}, b.\texttt{balance}, b.\texttt{shouldAccept}\}$. Notice that whenever $T.\texttt{value}>0$ we do have $T.\texttt{origin}.\texttt{balance}, T.\texttt{dest}.\texttt{balance} \in \Rset(T),\Wset(T)$ as the value needs to be deducted from $T.\texttt{origin}$ and added to $T.\texttt{dest}$.  But either $\Wset(T_1) = \emptyset$ or $\Wset(T_2) = \{b.\texttt{message}\}$ depending on the value of $b.\texttt{shouldAccept}$, because the message field is updated only if the \texttt{shouldAccept} value is true.

Now, let another EOA account $c$ initiate a transaction $T_2$ with the corresponding tuple \((\texttt{origin}: c, \texttt{dest}: b, \texttt{value}: 0, \texttt{calldata}: \{\texttt{signature}: \texttt{setShouldAccept}, \texttt{arg}: \texttt{false}\})\). 
We have $\Rset(T_2) = \emptyset, \Wset(T_2) = \{b.\texttt{shouldAccept}\}$. We can see from the read and write sets characterized here that $T_1,T_2$ can never have a read-from conflict if $T_1$ precedes $T_2$ (regardless of the control flow in $T_1$) in the \preset{} order. But if $T_2$ precedes $T_1$, then they will always have read-from conflict since $T_2$ always writes to $b.\texttt{shouldAccept}$ and $T_1$ always reads from it.
\end{example}
\noindent\textbf{Approach.} The EVM conflict analyzer is just labeling simple payments as $SimplePayment$, the designated \ERC{} functions with their respective signature and everything else as $exitsContract$ in line~\ref{spec:algo:weak-label}.
After the labeling, 
check if a transaction $T_j$ can have read-from conflicts with some $T_i, i<j$, we consider multiple scenarios. If $T_j$ is labeled $exitsContract$, we simply assume that they do have read-from conflict (line~\ref{spec:weak-algo:exit}). If both $T_i,T_j$ are labeled as one of the special \ERC{} transactions and interact with the same contract, it just checks to see if the sender and accounts in the arguments of the functions called for $T_j$ are disjoint from those of $T_i$ (line~\ref{spec:algo:weak-token}). This is enough as an \ERC{} contract only affects some mapping ($balances,allowances$) in the contract storage with the keys affected being either the transaction arguments or the initiator ($T.origin$). 

In all other cases, it is enough to check that the origin and destination of $T_i$ is disjoint from those of $T_j$ (line~\ref{spec:algo:weak-distinct}). Since both \ERC{} transactions and simple payments only affect the source and target of the transactions, this is enough to guarantee independence.

\begin{lstlisting}[caption={Smart Contracts}\label{lst:cap0}]
contract b {
    string message = "";
    bool shouldAccept = true;
    
    function receiveMessage(String message) external payable{
        if(shouldAccept) {
            this.message = message;
        }
    }
    
    function setShouldAccept(bool shouldAccept) external {
        this.shouldAccept = shouldAccept;
    }
}


contract Token {
    mapping(address => uint) balances;
    address priceOracle;
    
    function transfer(address target, uint amount) external {
        require( tokenBalances[msg.sender] >= amount );
        balances[msg.sender] -= amount;
        balances[target] += amount;
    }
    
    function turnEtherToToken() external payable {
        (, bytes memory data) = priceOracle.staticcall( abi.encodeWithSignature("tokenPrice()") );
        uint tokenPrice = bytesToInt(data);
        balances[msg.sender] = balances[msg.sender] + (msg.value/tokenPrice);
    }
}


contract Wallet {
    mapping(address => uint) balances;
    
    function addToWallet() external payable {
        balances[msg.sender] += msg.value;
    }
    
    function withdraw(uint amount) external {
        require(balances[msg.sender] >= amount);
        balances[msg.sender] -= amount;
        msg.sender.transfer(amount);
    }
}
\end{lstlisting}

\begin{algorithm}[!b]
    \scriptsize
    \caption{Output conflict specifications that is given as input to \Cref{algo:sdag,algo:saBlockSTM}.
    }
    \label{spec:weak-algo_main}
    \SetKw{Continue}{continue;}
    \KwIn {A transaction $T_j$ in a \preset{} order $T_1\to,\ldots,\to T_n$.} 
    \KwOut {$S \subseteq \{T_i | i<j,\Rset(T_j) \cap \Wset(T_i) = \emptyset\}$.} 
    \begin{minipage}[t]{1\textwidth}
        \LinesNotNumbered
        \SetNlSty{}{}{}
        \SetKwFunction{isToken}{is\ERC{}}
        \SetKwProg{Fn}{Fun}{:}{}
        \Fn{\isToken($label$)}{
            \nlset{2} \Return $label \in \{transfer,transferFrom,approve\}$
        }

        \SetKwFunction{ercAccounts}{erc20Accounts}
        \SetKwProg{Fn}{Fun}{:}{}
        \nlset{3}
        \Fn{\ercAccounts($T$)}{\label{alg:alg2l34}
        \tcp{Get the affected accounts in an \ERC{} transaction, for instance, a \emph{transferFrom()} involves \emph{from}, \emph{to}, and the \emph{sender} address. Two \ERC{} transfers are independent if their affected account sets are disjoint.}
        
            \nlset{4} \If{$label \in \{transfer,approve\}$}{
                \nlset{5} \Return $\{T.\texttt{origin}, T.\texttt{calldata}.\texttt{to}\}$
            }
            \nlset{6} \ElseIf{$label = transferFrom$}{
            
                \nlset{7} \Return $\{T.\texttt{origin}, T.\texttt{calldata}.\texttt{from}, T.\texttt{calldata}.\texttt{to}\}$
            }
        }
        
        \SetKwFunction{weakLabel}{getLabel}
        \SetKwProg{Fn}{Fun}{:}{}
        \nlset{8}\label{spec:algo:weak-label}\Fn{\weakLabel($T$)}{
                \nlset{9} \If{$T.dest \in A_U$}{
                \nlset{10} \Return $SimplePayment$.
            }
            \nlset{11} $sig \gets T.\texttt{sig}$
        
            \nlset{12} 
            \uIf{$sig \in \{transfer,transferFrom,approve\}$ in \ERC{}}{ 
                \nlset{13} \Return $T.\texttt{sig}$
            }
            \nlset{14} \Else{
                \nlset{15} \Return $exitsContract$
            }
            
        }

    \SetKwFunction{rfconset}{findCSet}
    \SetKwProg{Fn}{Fun}{:}{}
    \nlset{16}
    \Fn{\rfconset{$T_j$}}{
        \nlset{17} $label_j \gets \weakLabel(T_j)$
        
        \nlset{18} \If{$label_j = exitsContract$}{
        \label{spec:weak-algo:exit}
            \nlset{19} \Return $\emptyset$
        }
        
        \nlset{20} $Output \gets \emptyset$
        
        \nlset{21} \ForAll{$1 \leq i < j$}{
            \nlset{22} $label_i \gets \weakLabel(T_i)$.
            
            \nlset{23} \If{$T_i.\texttt{origin} = T_j.\texttt{origin}$ or $label_i = exitsContract$}{
                \nlset{24} \Continue
            }
            
            \nlset{25}\uIf{$T_i.dest = T_j.dest$ and \isToken($label_i$) and \isToken($label_j$)}{\label{spec:algo:weak-token}
            
                \nlset{26} $affectedAccounts_i$ $\gets$ \ercAccounts($T_i$)
                
                \nlset{27} $affectedAccounts_j$ $\gets$ \ercAccounts($T_j$)
    
                \nlset{28} \If{$affectedAccounts_i \cap affectedAccounts_j = \emptyset$}{
                
                    \nlset{29} $Output \gets Output \cup \{T_i\}$
                }
            }
            \nlset{30} \label{spec:algo:weak-distinct} \ElseIf{$|\{T_i.\texttt{origin}, T_j.\texttt{origin}, T_i.\texttt{dest}, T_j.\texttt{dest}\}| = 4$}{ 
            
                \nlset{31} $Output \gets Output \cup \{T_i\}$
            }
        }
        
        \nlset{32} \Return $Output$\label{alg:alg2l66}
    }
    \end{minipage}
\end{algorithm}

We now go over the \Cref{spec:weak-algo_main} for transactions interacting with \emph{Token contract} and \emph{Wallet contract}, as shown in \Cref{lst:cap0}. Notice that every transaction that calls a function other than $Token.transfer$ in these two contracts will be labeled as $exitsContract$ and thus be assumed to conflict with everything else. Therefore, just consider an account $a_1$ initiating a transaction $T_1$ invoking $Token.transfer$ with address $a_2$ and another account $b_1$ initiating $T_2$ invoking $Token.transfer$ with address $a_2$. Here, both functions are labeled $transfer$. Now, to check whether they are independent, one needs to check whether all $a_1,a_2,b_1,b_2$ are distinct.
Now, let $T_3$ be a simple payment of $c_1$ to $c_3$. Of course $T_3$ is labeled as $SimplePayment$. Similarly, one only needs to check the distinctness of $c_1, c_2$ with respect to each of $a_1,Token$ and $b_1,Token$ to see if any conflict arises.

\begin{proposition} 
\label{case:wellformed}
    \begin{itemize}
        \item Whether $T$ is a simple payment or calls a designated \ERC{} function, only writes to or reads from accounts $T.\texttt{origin},T.\texttt{dest}$.
        
        \item If $T$ calls an \ERC{} function in $transfer,transferFrom,approve$, then only $T.\texttt{dest}$ storage is affected. Assuming that the contract adheres to \ERC{} specifications, as long as $T.\texttt{origin}$ and accounts in $T.\texttt{data}$ are disjoint from another $T'$ calling these specific functions in $T.\texttt{dest}$, then different parts of the same mappings in $T.\texttt{dest}$ are modified. Also, if the contract is implemented as a proxy, we assume that only the proxy contract can call the parent (no EOA can initiate a transaction directly with the parent contract).
    \end{itemize}
\end{proposition}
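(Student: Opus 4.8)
The plan is to prove both bullets by a finite case analysis over the labels that a well-formed transaction $T$ can receive — namely $SimplePayment$, $transfer$, $transferFrom$, and $approve$ — appealing in each case to the operational semantics of EVM value transfers and to the canonical behaviour mandated by the \ERC{} standard. The $exitsContract$ case is vacuously excluded, since the proposition concerns only simple payments and the three designated \ERC{} functions.

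For the first bullet I would argue as follows. If $T$ is a simple payment then, by definition, $T.\texttt{dest} \in A_U$ and no contract code executes; the only state mutation is the debit of $T.\texttt{value}$ from $T.\texttt{origin}.\texttt{balance}$ and the matching credit to $T.\texttt{dest}.\texttt{balance}$, so every location in $\Rset(T) \cup \Wset(T)$ lies in $T.\texttt{origin}.\texttt{data} \cup T.\texttt{dest}.\texttt{data}$. If instead $T$ calls one of the three designated functions on the contract $T.\texttt{dest}$, conformance to the \ERC{} standard guarantees that the body touches only the contract's own storage variables (the $balances$ and $allowances$ mappings) and performs no external call that mutates a third account; together with the (optional) value transfer between origin and dest, this again confines $\Rset(T) \cup \Wset(T)$ to $\{T.\texttt{origin}, T.\texttt{dest}\}$.

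Next I would refine Part~1 into the slot-level statement needed for Part~2. Under conformance, each designated function accesses exactly the mapping entries keyed by the addresses it manipulates: $transfer(to,\cdot)$ touches $balances[T.\texttt{origin}]$ and $balances[to]$; $transferFrom(from,to,\cdot)$ touches $balances[from]$, $balances[to]$ and $allowances[from][T.\texttt{origin}]$; and $approve(spender,\cdot)$ touches $allowances[T.\texttt{origin}][spender]$. In every case the set of accessed storage slots is a function only of the affected-account set $\{T.\texttt{origin}\} \cup (\text{address arguments})$, which is precisely what $\textsf{erc20Accounts}(T)$ returns. Because Solidity lays out $m[k]$ at the slot $\mathrm{keccak256}(k \,\|\, \mathrm{slot}(m))$, distinct keys map collision-freely to distinct slots; hence two such transactions $T,T'$ on the same contract with $\textsf{erc20Accounts}(T) \cap \textsf{erc20Accounts}(T') = \emptyset$ access disjoint slots, giving $\Rset(T') \cap \Wset(T) = \emptyset$ and thus no read-from conflict. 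The proxy clause is handled by noting that a $delegatecall$ executes the parent's code against the proxy's ($T.\texttt{dest}$'s) storage, so the slot analysis is unchanged, while the stated assumption that no EOA interacts with the parent directly rules out the only route by which storage outside $T.\texttt{dest}$ could be reached.

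The main obstacle is that none of this can be discharged purely syntactically: the argument rests on a semantic model of the EVM and on a precise reading of ``adheres to \ERC{} specifications,'' neither of which is formalized in the excerpt. The delicate step is ruling out external calls and reentrancy, since an arbitrary function could touch unrelated accounts; I would therefore make explicit that \emph{conformance is exactly the hypothesis that excludes this}, and observe that functions which do call out (for instance $turnEtherToToken$ in \Cref{lst:cap0}) are precisely the ones the analyzer conservatively labels $exitsContract$. The only remaining subtlety, injectivity of the keccak-based slot map, is a standard collision-resistance assumption that I would invoke rather than prove.
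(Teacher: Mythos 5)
Your proposal is sound, but note that the paper itself offers \emph{no} proof of this proposition: it is stated as a modeling assumption about conformant contracts (in the same spirit as \cref{gigahorse:assumption} in the appendix) and is then invoked as a black box in the proof of \cref{th:cfweak}. What you have done is supply the semantic justification the paper deliberately leaves implicit: a case analysis over the labels $SimplePayment$, $transfer$, $transferFrom$, $approve$, grounded in EVM value-transfer semantics, the canonical \ERC{} function bodies, the keccak-based storage layout of Solidity mappings, and the fact that a \texttt{delegatecall} in a proxy executes the parent's code against the proxy's ($T.\texttt{dest}$'s) storage. Each step checks out, and your two key observations are exactly right and match the paper's own caveats: the statement cannot be discharged syntactically, and ``adheres to \ERC{} specifications'' is precisely the hypothesis that excludes external calls and reentrancy --- the paper concedes this when it warns that a malicious contract merely exposing \ERC{} signatures can make the analysis unsound, and that anything which does call out is conservatively labeled $exitsContract$. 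The trade-off between the two approaches is clear: the paper's axiomatic treatment keeps the soundness theorem clean and makes the trust boundary (conformance) a single named assumption, while your unpacking localizes exactly which semantic facts are being trusted (conformance, keccak collision resistance, delegatecall storage semantics), which is more informative but requires a formal EVM model the paper never introduces. One small omission worth flagging: in real EVM execution every transaction also credits gas fees to the \emph{Coinbase} account, which literally violates your first bullet; the paper is aware of this and handles it by deferring the Coinbase credit to the end of the block (\cref{Sec:experiments}), so your argument is correct only modulo that implementation-level fix, and a complete write-up should say so.
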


\begin{theorem}\label{th:cfweak} 
    If the \cref{spec:weak-algo_main} outputs an independence set $\mathcal{T}$ for a transaction $T_j$, then for each $T \in \mathcal{T}$ in the \preset{} order before $T_j$, $\Rset(T_j) \cap \Wset(T) = \emptyset$.
\end{theorem}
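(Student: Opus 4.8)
The plan is to prove \Cref{th:cfweak} by a case analysis on the labels of $T_j$ and of each candidate predecessor $T_i$, showing that in every case where \Cref{spec:weak-algo_main} places $T_i$ into the output set, the disjointness condition $\Rset(T_j) \cap \Wset(T_i) = \emptyset$ is forced by \Cref{case:wellformed}. I would first fix a transaction $T_i$ with $i<j$ that is added to $Output$ by the algorithm, and observe that the algorithm only ever reaches an $Output \gets Output \cup \{T_i\}$ statement (line~\ref{spec:algo:weak-distinct} or line~\ref{spec:algo:weak-token}) after passing the guard that $label_j \neq exitsContract$ (line~\ref{spec:weak-algo:exit}) and that $T_i.\texttt{origin} \neq T_j.\texttt{origin}$ with $label_i \neq exitsContract$ (line~\ref{spec:algo:weak-distinct}, i.e.\ not hitting the \Continue). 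Hence both $T_i$ and $T_j$ are restricted to being either simple payments or designated \ERC{} calls, which is exactly the regime covered by \Cref{case:wellformed}.

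The key steps, in order, would be: (1) By the first bullet of \Cref{case:wellformed}, since $T_i$ is either a simple payment or a designated \ERC{} call, $\Wset(T_i)$ only touches data associated with $T_i.\texttt{origin}$ and $T_i.\texttt{dest}$ (the balances for a payment, and the contract storage of $T_i.\texttt{dest}$ for an \ERC{} call); symmetrically $\Rset(T_j)$ only touches data associated with $T_j.\texttt{origin}$ and $T_j.\texttt{dest}$. (2) Handle the branch at line~\ref{spec:algo:weak-distinct}: here the guard $|\{T_i.\texttt{origin}, T_j.\texttt{origin}, T_i.\texttt{dest}, T_j.\texttt{dest}\}| = 4$ means all four accounts are pairwise distinct, so the accounts whose data $T_j$ reads are disjoint from the accounts whose data $T_i$ writes, giving $\Rset(T_j)\cap\Wset(T_i)=\emptyset$ immediately. (3) Handle the branch at line~\ref{spec:algo:weak-token}: here $T_i.\texttt{dest} = T_j.\texttt{dest}$, so both write into and read from the \emph{same} contract storage, and disjointness of \emph{accounts} is no longer enough — we must instead use the second bullet of \Cref{case:wellformed}, which states that a designated \ERC{} call only modifies the portions of the $balances$/$allowances$ mappings keyed by $T.\texttt{origin}$ and the account arguments in $T.\texttt{data}$. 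Since \texttt{erc20Accounts} returns exactly those keys and the algorithm checks $affectedAccounts_i \cap affectedAccounts_j = \emptyset$, the mapping slots read by $T_j$ are disjoint from those written by $T_i$, so again $\Rset(T_j)\cap\Wset(T_i)=\emptyset$.

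I expect the main obstacle to be step (3), the same-contract \ERC{} case, because there the overlap argument is no longer at the coarse granularity of whole accounts but at the granularity of individual mapping entries inside one contract's storage. The correctness of disjointness there rests entirely on the semantic assumption, encoded in \Cref{case:wellformed}, that a standards-compliant \ERC{} function reads and writes \emph{only} the mapping slots indexed by $msg.sender$ (i.e.\ $T.\texttt{origin}$) and the address arguments — and that the \texttt{erc20Accounts} helper faithfully enumerates precisely these indices for each signature ($transfer$, $transferFrom$, $approve$). I would therefore need to argue carefully that $\Rset(T_j)$ restricted to $T_j.\texttt{dest}$ is contained in the storage slots keyed by $affectedAccounts_j$, and likewise $\Wset(T_i)$ restricted to $T_i.\texttt{dest}$ is contained in the slots keyed by $affectedAccounts_i$, so that disjoint key sets imply disjoint slot sets.

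Finally, I would close by noting that these two branches are the only places where $T_i$ is added to $Output$, so the disjointness holds for every $T \in \mathcal{T}$, which establishes the claim. Two subtleties worth flagging that I would verify rather than gloss over: first, that the case where one of $T_i, T_j$ is a simple payment and the other is an \ERC{} call falls into the line~\ref{spec:algo:weak-distinct} branch (since \texttt{isToken} fails for the payment), and there the four-distinct-accounts check correctly separates the payment's read/write set from the \ERC{} contract's storage; and second, that the proxy-contract caveat in \Cref{case:wellformed} is needed to ensure $T.\texttt{dest}$ genuinely captures the only contract storage affected, ruling out hidden writes to a parent contract that could reintroduce a conflict.
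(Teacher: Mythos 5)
Your proposal is correct and follows essentially the same route as the paper's proof: prune the $exitsContract$ labels via the guards, invoke \cref{case:wellformed} to confine each remaining transaction's reads and writes to $\{T.\texttt{origin}, T.\texttt{dest}\}$ (or, inside a shared \ERC{} contract, to the mapping slots keyed by the affected accounts), and conclude disjointness from the checks at lines~\ref{spec:algo:weak-distinct} and~\ref{spec:algo:weak-token}. If anything, you are more careful than the paper: its proof of \cref{th:cfweak} only argues the four-distinct-accounts branch explicitly and elides the same-contract \ERC{} branch (the slot-level argument via \texttt{erc20Accounts} that you develop in your step~(3), together with the proxy caveat, appears only in the appendix proof of the strong-mode analogue), so your treatment of that case fills in a step the paper's own write-up leaves implicit.
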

\begin{proof}
Let $T \in \mathcal{T}$ in the \preset{} order be before $T_j$. First note that by the algorithm, if $T$ label is $exitsContract$, then it is overapproximated to have read-from conflict with every transaction before in the \preset{} order. So we can assume $T$ is either a simple payment or an \ERC{} transaction. Then (according to the definition of simple payment and \cref{case:wellformed} and line~\ref{spec:algo:weak-label}) $T$ only modifies/reads from $\{T.origin,T.dest\}$ which is guaranteed to be disjoint from $\{T_j.\texttt{origin}, T_j.\texttt{dest}\}$ by line~\ref{spec:algo:weak-distinct}. Therefore, by line~\ref{spec:algo:weak-label} and \cref{case:wellformed}, $T_j$ only modifies $\{T_j.\texttt{origin},T_j.\texttt{dest}\}$ which is disjoint from the read set of $T$ implying that $T$ does not have a read-from conflict with $T_j$.
\end{proof}
%

\subsection{Conflict Analyzer Integration with EVM for Parallel Execution}\label{subsec:conf_analyzer_evm_exp}
This section presents the analysis of \emph{full} conflict analyzer integrated with our \BTM{} algorithm for the EVM. 
We evaluated the performance of the proposed \BTM{} algorithm on the EVM, leveraging the conflict specification provided by the conflict analyzer on both synthetic and historical blocks. 

\noindent\textbf{Implementation Details.}
We introduced several implementation-level modifications to the \oBTM{}, resulting in an {extended} version that we refer to as \iBTM{}. {It uses analyzer-provided dependency information (beyond the sender, receiver, and contract addresses used in \oBTM{}) to reduce aborts, and applies the lazy update optimization (of \PEVM{}) for native ETH transfers, while other transactions remain optimistically executed.} We choose to perform tests with \osaBlockSTM{}, as it required minimal changes to existing implementation and demonstrates the potential benefits of the conflict analyzer when integrated with the optimistic execution of \saBlockSTM{}. Note that for this analysis, we re-run the experiments in our integrated conflict analyzer setup, and the numbers in~\Cref{subsec:evm_analysis} can be treated as the ground truth. 
The entire implementation built on top of \PEVM{} which is 8097 lines of code with 4454 lines added over the original \PEVM{} implementation.


The conflict analyzer is implemented in approximately 700 lines of Rust code. It uses HashMap that maps each accessed address to a vector of transactions that have interacted with it. 
Transactions labeled as $exitsContract$ are pruned in the first pass to reduce the number of pairs evaluated for independence as a transaction labeled $exitsContract$ is assumed to conflict with all others. After that if the number of transactions remaining after that is high enough (a manually set parameter), parallelization is done across 16 threads.

{In addition to the implemented version of the conflict analyzer detailed in \Cref{subsec:conf_spec}, we additionally implemented another \emph{strong} version of the conflict analyzer. \revision{The strong version of the conflict analyzer is presented in \Cref{sec:conf_spec_evm}.} Both of these are automated completely and proved to be sound (but not complete) if correctly implemented (cf. \Cref{th:cfweak}). The implemented analyzer main limitation is its inability to deal with non \ERC{} contracts, marking every transaction to such contracts as dependent on all other transactions. Strong analyzer main limitation is that when a contract function calls another contract $C$ not known statically (that is, the address of the $C$ is determined by the input data of the transaction), one cannot know what sequence of blockchain state manipulations after that and thus such transactions are marked dependent on everything else. In the case that $C$ and its payload are known statically, it is rather easy to add more rigorous specifications.
There exist inputs which can make the analyzer derive no independence specification making the execution as slow as sequential. To be more precise, we analyze the \ERC{} contracts by their function signature, so a malicious contract not adhering to \ERC{} specifications may result in an unsound analysis, however, one can regularly inspect deployed contracts either manually or automatically to adhere to specifications and analyze only those marked to adhere to specifications (which can be done by current tools \cite{etherscan,MythX}). 
}



\begin{table}[!tb]
    \caption{Conflict analyzer integrated with \osaBlockSTM{} for parallel execution in EVM; metrics from conflict analysis phase, block composition, and execution performance across different approaches.}
    \label{tab:conflict-analysis_wc_et}
    \resizebox{\textwidth}{!}{%
    \begin{tabular}{|r|r|c|c|c|c|c|c|}
        \hline
            & & \multicolumn{4}{c|}{\textbf{Ethereum Historical Blocks}} & \multicolumn{2}{c|}{\textbf{Synthetic 10k Blocks}}\\
            \cline{3-8}
            & \multirow{2}{*}{\shortstack{\textbf{Metric}}} & 
              \multirow{2}{*}{\shortstack{\textbf{4605168}\\\textbf{(CryptoKitties)}}} & 
              \multirow{2}{*}{\shortstack{\textbf{15537421}\\\textbf{(Ethereum~2.0)}}} & 
              \multirow{2}{*}{\shortstack{\textbf{17873752}\\\textbf{(Worst-case)}}} & 
              \multirow{2}{*}{\shortstack{\textbf{17873654}\\\textbf{(Best-case)}}} & 
              \multirow{2}{*}{\shortstack{\textbf{\ERC{}}}} & 
              \multirow{2}{*}{\shortstack{\textbf{Mix}}} \\
              & & & & & & &\\
        \hline\hline

        \multirow{4}{*}{\centering\shortstack{\textbf{Conflict}\\\textbf{Analysis}\\\textbf{Phase}}} 
            & & & & & & &\\
            & \textbf{Conflict Generation Time} 
            & 18.314 µs 
            & 39.00  µs  
            & 106.79 µs 
            & 30.09  µs  
            & 6.06   ms
            & 3.30   ms
            \\
            
            & \textbf{Specification Size}            
            & 627      
            & 943       
            & 50706     
            & 1691  
            & 49658226
            & 49824316
            \\

        & & & & & & &\\
        \hline
        
        & & & & & & &\\
        \multirow{5}{*}{\centering\shortstack{\textbf{Block}\\\textbf{Stats}}}
            & \textbf{Block Size}                    
            & 77       
            & 173       
            & 1189      
            & 136   
            & 10000
            & 10000
            \\
            & \textbf{ETH Transactions}       
            & 32       
            & 33        
            & 1129      
            & 41  
            & 0
            & 5000
            \\
            & \textbf{\ERC{} Transactions}           
            & 9        
            & 11        
            & 21        
            & 18   
            & 10000
            & 5000
            \\
            
            & \textbf{Other Contract Transactions} 
            & 36
            & 129 
            & 39
            & 77
            & 0
            & 0
            \\
            
            & \textbf{Dependency Count} 
            & 10 
            & 7  
            & 4  
            & 0  
            & 5683
            & 3215
            \\
        & & & & & & &\\
            
        \hline
        
        & & & & & & &\\
        \multirow{4}{*}{\shortstack{\textbf{Average}\\\textbf{Execution}\\\textbf{Time}\\\textbf{(\textbf{16 threads)}}}} 
            & \textbf{\textcolor{black!80}{Sequential}} 
            & \textcolor{black!80}{2.107 ms}
            & \textcolor{black!80}{15.60 ms} 
            & \textcolor{black!80}{7.21 ms}
            & \textcolor{black!80}{36.63 ms}
            & \textcolor{black!80}{114.43 ms}
            & \textcolor{black!80}{72.12 ms}
            \\

            & \textbf{\textcolor{teal!80}{\PEVM{}}} 
            & \textcolor{teal!80}{1.32 ms} 
            & \textcolor{teal!80}{6.93 ms} 
            & \textcolor{teal!80}{6.03 ms} 
            & \textcolor{teal!80}{27.52 ms} 
            & \textcolor{teal!80}{96.75	ms}
            & \textcolor{teal!80}{53.60 ms}
            \\

            & \textbf{\textcolor{purple!80}{\oBTM{}}} 
            & \textcolor{purple!80}{1.24 ms} 
            & \textcolor{purple!80}{6.88 ms} 
            & \textcolor{purple!80}{5.99 ms} 
            & \textcolor{purple!80}{27.68 ms} 
            & \textcolor{purple!80}{88.67 ms}
            & \textcolor{purple!80}{54.14 ms}
            \\
            
            & \textbf{\textcolor{orange!80}{\iBTM{}}} 
            & \textbf{\textcolor{orange!80}{1.23 ms}}
            & \textbf{\textcolor{orange!80}{6.43 ms}}
            & \textbf{\textcolor{orange!80}{5.43 ms}}
            & \textbf{\textcolor{orange!80}{26.85 ms}}
            & \textbf{\textcolor{orange!80}{79.42 ms}}
            & \textbf{\textcolor{orange!80}{39.50 ms}}
            \\
            
        & & & & & & &\\
        
        \hline
            
    \end{tabular}%
}
\end{table}
\input{figs/plot-EVM-integrated}

\noindent
\textbf{Testing on Synthetic Workloads.}
\Cref{tab:conflict-analysis_wc_et} presents comprehensive details of integrated experiments, including conflict analysis and execution latency on historical Ethereum and large synthetic blocks. The reported metrics capture statistics from the conflict analysis phase, including the time required to generate conflict specifications and their size, as well as block characteristics such as block size, transaction type distribution, and conflict dependency counts for block. Also report the execution time under three configurations: sequential, \PEVM{} and \osaBlockSTM{} (execution with 16 threads). The goal is to evaluate computational cost and execution efficiency in both historical and large synthetic blocks.

For analysis, we choose two synthetic workloads from Ethereum analysis \Cref{subsec:evm_analysis}: \ERC{} and Mix workloads, with a block size of 10k transactions and a fixed conflict distribution factor $\alpha = 0.1$, while varying the thread count on the X-axis. \Cref{fig:exp-cewb_intgrated_evm_new} shows the advantage of the conflict analyzer in saving the abort and re-execution cost for optimistic execution. As the thread count increases, in both workloads, \ERC{} transfer (\Cref{fig:exp-cewb_intgrated_evm_new}a) and Mix (\Cref{fig:exp-cewb_intgrated_evm_new}b) a noticeable improvement in execution latency can be observed for \osaBlockSTM{} over \PEVM{}, highlighting the benefits of leveraging conflict specifications for execution. The benefit is especially pronounced for the \ERC{} workload, which contains more uniform transaction patterns.

\noindent
\textbf{Testing on Historical Ethereum Blocks.} 
{Figures \ref{fig:exp-cewb_intgrated_evm_new}c$-$f} illustrates the performance on Ethereum blocks. We select one block each from the CryptoKitties (same block as in~\Cref{fig:exp-ck}b) and the Ethereum~2.0 workloads (same block as in~\Cref{fig:exp-e2o}b). We also analyzed performance on the worst and best-case blocks of the workload in~\Cref{fig:exp-wb_case}. As shown in \Cref{fig:exp-cewb_intgrated_evm_new}, with our conflict analyzer, read-write conflict detection improves the performance of \iBTM{}, which can be clearly observed by comparing the relative performance of \PEVM{} with \iBTM{}. The performance of \iBTM{} improves in almost all cases compared to \PEVM{} and \oBTM{} with a noticeable difference, which is expected to improve further with a fully optimized and mature conflict analyzer. This performance could be due to the reduced number of aborts and re-executions (re-validations) in the optimistic execution.

\noindent\emph{\textbf{To conclude,} the proposed \iBTM{} achieves up to a $1.75\times$ speedup over sequential execution and a $1.33\times$ speedup over \PEVM{}, with an average speedup of $1.24\times$ over \PEVM{} in synthetic workloads. In historical workloads, it achieves a maximum speedup of $2.43\times$ and $1.14\times$ over sequential and \PEVM{}, also consistently outperforms \PEVM{} across all evaluated scenarios.}

\subsection{Adaptive Implementation Based on Conflict Threshold for EVM}\label{subsec:adaptive}
The advantage of the conflict specification is that it allows us to determine how many pairwise conflicts exist in a block of transactions. We demonstrate how we can leverage this for an adaptive implementation that can also handle high-conflict workloads (\Cref{fig:exp-wb_case_adaptive}a). Specifically, by computing a \emph{conflict threshold} for each block with minimal overhead, we deterministically fall back to sequential execution in the case of high conflicts.

\Cref{fig:exp-wb_case_adaptive} demonstrate the advantage of an adaptive technique that dynamically selects the most suitable execution path based on the characteristics of the workload. When $\alpha = 0.1$, the \iBTM{} outperforms all other approaches, leading the adaptive technique to choose it for execution. In contrast, when $\alpha = 5$, the workload becomes highly sequential, and all parallel approaches perform worse than sequential execution. The adaptive approach could fall back to sequential execution; however, there could be small overhead with selecting the optimal approach; this is offset by significant execution time savings in highly conflicting workloads. The current adaptive mechanism is implemented and evaluated for two execution paths, sequential or \iBTM{}, and considers the conflict threshold and block size for decision making. Exploring additional metrics beyond the conflict threshold and block size, such as available compute, expected block computation cost, and past $n$ block execution statistics, as well as different adaptive parallel execution paths (e.g. \dBTM{}, \iBTM{}, \PEVM{}, sequential), remains ongoing work and may constitute standalone future work.

\input{figs/plot-EVM-synt-adaptive}

\section{\BTMemory{} for MoveVM}\label{subsec:movevm_btm_integrated}
\revision{In \Cref{sec:aptos-model}, we demonstrate how we derive the conflict specifications using conflict analyzer for Aptos's MoveVM. We then present a limited set of performance numbers for conflict analyzer integrated \BTM{} on the MoveVM in~\Cref{subsec:conf_analyzer_movevm_exp}. The \BTM{} implementation for MoveVM leverages a custom-built conflict analyzer that is more conservative than the EVM conflict analyzer because the Aptos Move memory model is quite different from EVM posing several challenges for a general specification derivation scheme.}
\revision{\subsection{Conflict Specifications for MoveVM} \label{sec:aptos-model}   
The Aptos memory model is quite different from EVM posing several challenges for a general specification derivation scheme. Aptos consists of a set of accounts $A$, where each account $a \in A$ has a set of modules $a.\texttt{mod}$ and a set of assets $a.\texttt{asset}$. A module is just a code in the Move language. Each asset $B$ is a struct defined in a module and can only be manipulated by the defining module. For example, as Aptos coin is defined by the aptos\_coin module, any module in need of transferring their Aptos coins needs to utilize public functions of aptos\_coin (if aptos\_coin did not provide such functions, manipulation would be possible only from inside the module).

To see why an EVM approach is not likely to work here, note that Move bytecode is stack-based, making storage tracking extremely difficult. Though the EVM opcode is also stack-based, we can bypass the problem in EVM with the help of a sophisticated decompiler such as Gigahorse, which derives a three-address code representation, whereas no similar tool is defined for Aptos. Even if such a tool existed for Aptos, there are other limitations. In EVM, a contract can only touch the balance or storage of any other account through specific opcodes. Thus, one can statically detect functions that never exit their contract. Moreover, P2P transactions can be clearly detected and can only touch their source and target account. In contrast, there is no clear bound on where a module can touch in the Move as it can move the assets of any account.

That is why as a proof of concept for the Aptos, we just selected two functions $TeviStar::transfer,aptos\_coin::transfer$ for deriving the specifications. Each of these functions is only supposed to transfer an asset from a source to a target. Let $T_1$ send some assets from $s_1$ to $t_1$ and similarly $T_2$ send some assets from $s_2$ to $t_2$. We only need to check $s_1,s_2,t_1,t_2$ are all different accounts to ensure that $T_1,T_2$ do not yield any memory conflicts.

\begin{algorithm}[!htbp]
\footnotesize
\caption{Deriving the Aptos's MoveVM specifications.}

\SetKwFunction{getCset}{get\_\Cset{}}
\SetKwProg{Fn}{function}{:}{}
\Fn{\getCset($T$)}{
    \uIf{$T.dest \notin \{Aptos,TeviStar\}$ and $T.function \neq transfer$}{
    \Return $\emptyset$.
    }
    
    $\Cset{} \gets \emptyset$.
    
    \tcp{Loop over transactions before $T$ in the \preset{} order.}
    
    \ForAll{$\{T' | T'.index<T.index\}$}{
            \uIf{$|\{T.signer,T'.signer,T.args[0],T'.args[0]\}| = 4$}{
                $\Cset{} \gets \emptyset$.
        }
    }
}
\end{algorithm}

Analyzing the miss-predictions of conflict specifications on real-world workloads is an important direction for future work. Such an analysis could help characterize false positives and false negatives in conflict specifications by the analyzer, reveal underlying workload patterns (e.g., contention hotspots, skewed access distributions), and ultimately guide the refinement of conflict detection heuristics or the development of more adaptive execution models.

\subsection{Conflict Analyzer Integration with MoveVM for Parallel Execution}
\label{subsec:conf_analyzer_movevm_exp}
We now present the microbenchmarks for the \emph{full} MoveVM implementation, leveraging the conflict specification provided by the conflict analyzer on both synthetic workloads. As described in \cref{Sec:experiments}, we implemented our \BTM{} algorithm on top of the Diem \BlockSTM{} and employed the same workload as previously described in \Cref{subsec:movevm_analysis}. The number of lines of code (Rust) for the Diem parallel-execution repository is 2391 and our MoveVM implementation is 4311 excluding the conflict analyzer. Overall, we added 2063 lines of code.

\noindent\textbf{Implementation Details.}
As discussed earlier in \Cref{subsec:conf_analyzer_evm_exp}, we introduced several implementation-level modifications to the \saBlockSTM{} executor, resulting in an optimized version that we call the \osaBlockSTM{}. Here, we compare the performance of \iBTM{} with \BSTM{} and the baseline sequential execution.

\begin{table}[!tb]
    \centering
    \caption{Conflict analyzer integrated with \osaBlockSTM{} for parallel execution in MoveVM; metrics from conflict analysis phase, block composition, and execution performance across different approaches.}
    \label{tab:conflict-analysis_wc_et_move}
    \resizebox{\textwidth}{!}{%
    \begin{tabular}{|r|r|c|c|c|}
        \hline
        \multicolumn{2}{|c|}{\multirow{2}{*}{\centering \textbf{Metric}}}
        & \multicolumn{3}{c|}{\textbf{Synthetic 10k Blocks}} \\
        \cline{3-5}
         \multicolumn{2}{|c|}{}
        & {\textbf{P2P Transfers}} 
        & {\textbf{Batch Transfers}} 
        & {\shortstack{\textbf{Generic} \textbf{(P2P + Batch)}}}
        \\
        \hline\hline

        \multirow{2}{*}{\shortstack{\textbf{Conflict}\\\textbf{Analysis Phase}}}
            & \textbf{Conflict Analysis Time} 
            & 1.04 ms 
            & 85.44 $\mu$s 
            & 638.847 $\mu$s 
            \\
            & \textbf{Specification Size} 
            & 3348 
            & 0 
            & 2243 
            \\
        \hline

        \multirow{5}{*}{\textbf{Block State}} 
            & \textbf{Block Size} 
            & 10000 
            & 10000 
            & 10000 
            \\
            & \textbf{P2P Transfers} 
            & 10000 
            & 0 
            & 0 
            \\
            & \textbf{Batch Transfers} 
            & 0 
            & 10000 
            & 0 
            \\
            & \textbf{Generic (P2P + Batch)}
            & 0 
            & 0 
            & 10000 
            \\
            & \textbf{Dependency Count} 
            & 6652 
            & 0 
            & 2757 
            \\
        \hline

        \multirow{3}{*}{\shortstack{\textbf{Average}\\\textbf{Execution Time}\\\textbf{(16 Threads)}}} 
            & \textbf{Sequential} 
            & 3302 ms 
            & 4551 ms 
            & 3895 ms 
            \\
            & \textbf{\textcolor{teal!80}{\BSTM{}}} 
            & \textcolor{teal!80}{156 ms} 
            & \textcolor{teal!80}{249 ms} 
            & \textcolor{teal!80}{203 ms}
            \\
            & \textbf{\textcolor{orange!80}{iBTM}}
            & \textbf{\textcolor{orange!80}{154 ms}}
            & \textbf{\textcolor{orange!80}{236 ms}}
            & \textbf{\textcolor{orange!80}{200 ms}}
            \\
        \hline

    \end{tabular}%
    }
\end{table}

\noindent
\textbf{Testing on Synthetic Workloads.}
\input{figs/plot-MoveVM-synt-integrated-main}
{\Cref{fig:movevm_batch_generic} presents the analysis of \iBTM{} implementation for MoveVM on synthetic workloads: P2P transfers, batch transfers, and a generic contract that combine P2P and batch transfer transactions in equal proportion. \Cref{fig:movevm_batch_generic} shows throughput and latency results for MoveVM across different synthetic workloads leveraging specifications generated by the integrated conflict analyzer. We compared the performance of \iBTM{} with the baseline sequential execution and the state-of-the-art \BlockSTM{} parallel execution while varying the thread count from 2-32, with a block size of 10k transactions. 

In the low-conflict P2P workload, \Cref{fig:movevm_batch_generic}a, where $\alpha = 0.1$, \iBTM{} scales from 12k-77k~tps, reducing execution latency 3.20$\times$. \BlockSTM{} achieves a comparable 73k~tps at 32 threads. In high-conflict P2P, where $\alpha = 1.0$ (\Cref{fig:movevm_batch_generic}a), throughput is modestly from 11k-14k~tps. However, in \Cref{fig:movevm_batch_generic}b, the batch transfer workload when $\alpha = 0.1$ exhibits better scalability, with \iBTM{} increasing from 7k-50k~tps, outperforming \BlockSTM{}’s 47k~tps. In the generic mixed workload with $\alpha = 0.1$ (\Cref{fig:movevm_batch_generic}b) \iBTM{} again scales well, improving from 9k-63k~tps. Overall, \iBTM{} consistently outperforms \BlockSTM{} and sequential execution, demonstrating the benefits of leveraging conflict-specification for optimistic parallel execution. 
}
}

    \section{Discussion and Concluding Remarks}\label{sec:conc}

The outstanding objective of this paper is to expose the significance of a smart contract parallel execution methodology that prioritizes leveraging transactional conflict specifications as input for the parallel executor. We demonstrated how this methodology can be constructive by implementing a state-of-the-art EVM parallel execution engine. We also demonstrated the methodology for the MoveVM and the empirical advantages it entails. We remark that we chose to exhibit this methodology for the \rwOblivious{} execution models of EVM and MoveVM when it would be more straightforward to implement our algorithms for the \rwAware{} execution model like Solana, since it does not require an explicit \emph{conflict analyzer} tool. We leave that for a future iteration of the paper.

\noindent\textbf{Localized Fee Markets.} Our parallel execution methodology possibly lays the foundation for more localized dynamic \emph{gas fee} marketplaces; if dependencies are specified a priori, transactions occurring in a congested contract of the blockchain state might be processed separately from others to prevent a localized state hotspot from increasing fees for the whole blockchain network. For example, a popular \emph{non fungible token (NFT)} mint could create a large number of transaction requests in a short period of time~\cite{nftMintingCongestion}. A \rwAware{} blockchain can detect state hotspots upfront, such as the NFT minting event, to rate limit and charge a higher fee for transactions containing them~\cite{HeliusSolanaLFM2024,solanaFeeMarketplace,SolanaLFM2024}. This enables ordinary transactions to execute promptly, while transactions related to the minting process are prioritized independently based on the total gas associated with them and resulting congestion.

    \bibliographystyle{plainurl}
    \bibliography{references2}

\def\noopsort#1{} \def\No{\kern-.25em\lower.2ex\hbox{\char'27}}
  \def\no#1{\relax} \def\http#1{{\\{\small\tt
  http://www-litp.ibp.fr:80/{$\sim$}#1}}}
\begin{thebibliography}{10}

\bibitem{amiri2019parblockchain}
Mohammad~Javad Amiri, Divyakant Agrawal, and Amr El~Abbadi.
\newblock Parblockchain: Leveraging transaction parallelism in permissioned
  blockchain systems.
\newblock In {\em 2019 IEEE 39th International Conference on Distributed
  Computing Systems (ICDCS)}, pages 1337--1347, Los Alamitos, CA, USA, jul
  2019. IEEE, IEEE Computer Society.
\newblock \href {https://doi.org/10.1109/ICDCS.2019.00134}
  {\path{doi:10.1109/ICDCS.2019.00134}}.

\bibitem{Anjana-ObjSC-Netys-2020}
Parwat~Singh Anjana, Hagit Attiya, Sweta Kumari, Sathya Peri, and Archit
  Somani.
\newblock Efficient concurrent execution of smart contracts in blockchains
  using object-based transactional memory.
\newblock In {\em Networked Systems}, pages 77--93, Cham, 2021. Springer
  International Publishing.
\newblock \href {https://doi.org/10.1007/978-3-030-67087-0_6}
  {\path{doi:10.1007/978-3-030-67087-0_6}}.

\bibitem{Anjana:OptSC:PDP:2019}
Parwat~Singh Anjana, Sweta Kumari, Sathya Peri, Sachin Rathor, and Archit
  Somani.
\newblock An efficient framework for optimistic concurrent execution of smart
  contracts.
\newblock In {\em 27th Euromicro International Conference on Parallel,
  Distributed and Network-Based Processing (PDP)}, pages 83--92. IEEE, IEEE
  Computer Society, Feb 2019.
\newblock \href {https://doi.org/10.1109/EMPDP.2019.8671637}
  {\path{doi:10.1109/EMPDP.2019.8671637}}.

\bibitem{anjana2022optsmartDAPD22022}
Parwat~Singh Anjana, Sweta Kumari, Sathya Peri, Sachin Rathor, and Archit
  Somani.
\newblock Optsmart: a space efficient optimistic concurrent execution of smart
  contracts.
\newblock {\em Distributed and Parallel Databases}, pages 1--53, 2022.
\newblock \href {https://doi.org/10.1007/s10619-022-07412-y}
  {\path{doi:10.1007/s10619-022-07412-y}}.

\bibitem{aptos}
{Aptos Labs}.
\newblock The aptos blockchain: Safe, scalable, and upgradeable web3
  infrastructure, August 2022.
\newblock [August 11, 2022].
\newblock URL:
  \url{https://aptosfoundation.org/whitepaper/aptos-whitepaper_en.pdf}.

\bibitem{Baheti-DiPETrans-CCPE2022}
Shrey Baheti, Parwat~Singh Anjana, Sathya Peri, and Yogesh Simmhan.
\newblock Dipetrans: A framework for distributed parallel execution of
  transactions of blocks in blockchains.
\newblock {\em Concurrency and Computation: Practice and Experience},
  34(10):e6804, 2022.
\newblock \href {https://doi.org/10.1002/cpe.6804}
  {\path{doi:10.1002/cpe.6804}}.

\bibitem{blackshear2019move}
Sam Blackshear, Evan Cheng, David~L Dill, Victor Gao, Ben Maurer, Todd Nowacki,
  Alistair Pott, Shaz Qadeer, Dario~Russi Rain, Stephane Sezer, et~al.
\newblock Move: A language with programmable resources.
\newblock {\em Libra Assoc}, page~1, 2019.

\bibitem{blockstm-setup}
Block-stm implementation.
\newblock [Github, Accessed: January 2024].
\newblock URL: \url{https://github.com/danielxiangzl/Block-STM}.

\bibitem{revm}
BlueAlloy.
\newblock Revm: A rust implementation of the evm, 2023.
\newblock [Github, Accessed: January 2025].
\newblock URL: \url{https://github.com/bluealloy/revm}.

\bibitem{chainstackTraceblock}
Chainstack.
\newblock Ethereum traceblockbynumber api reference, 2025.
\newblock [Docs, Accessed: January 2025].
\newblock URL:
  \url{https://docs.chainstack.com/reference/ethereum-traceblockbynumber}.

\bibitem{cryptokitties}
CryptoKitties.
\newblock Cryptokitties webpage, 2024.
\newblock [Webpage, Accessed: January 2025, Contract Address:
  0x06012c8cf97BEaD5deAe237070F9587f8E7A266d, Creation Transaction:
  0x691f348ef11e9ef95d540a2da2c5f38e36072619aa44db0827e1b8a276f120f4].
\newblock URL: \url{https://www.cryptokitties.co/}.

\bibitem{Dickerson+:ACSC:PODC:2017}
Thomas Dickerson, Paul Gazzillo, Maurice Herlihy, and Eric Koskinen.
\newblock Adding concurrency to smart contracts.
\newblock In {\em Proceedings of the ACM Symposium on Principles of Distributed
  Computing}, pages 303--312, New York, NY, USA, 2017. PODC '17, ACM.
\newblock \href {https://doi.org/10.1145/3087801.3087835}
  {\path{doi:10.1145/3087801.3087835}}.

\bibitem{ethereum}
Ethereum (eth): open-source blockchain-based distributed computing platform.
\newblock [Webpage, Accessed: January 2025].
\newblock URL: \url{https://www.ethereum.org/}.

\bibitem{etherscan}
Etherscan verifier, July 2025.
\newblock [Website, Accessed: July 2025].
\newblock URL: \url{https://etherscan.io/verifyContract}.

\bibitem{caseofppchains2022}
Mohamed Fouda.
\newblock The case for parallel processing chains, September 2022.
\newblock [Blog, Accessed: January 2025].
\newblock URL:
  \url{{https://medium.com/alliancedao/the-case-for-parallel-processing-chains-90bac38a6ba4}}.

\bibitem{ethereum2merge}
Ethereum Foundation.
\newblock Ethereum 2.0 merge, 2022.
\newblock [Blog, Accessed: January 2025].
\newblock URL: \url{https://ethereum.org/en/upgrades/merge/}.

\bibitem{suiParallelExe}
Sui Foundation.
\newblock All about parallelization, January 2024.
\newblock [Blog, Accessed: January 2025].
\newblock URL: \url{https://blog.sui.io/parallelization-explained/}.

\bibitem{blockstm}
Rati Gelashvili, Alexander Spiegelman, Zhuolun Xiang, George Danezis, Zekun Li,
  Dahlia Malkhi, Yu~Xia, and Runtian Zhou.
\newblock Block-stm: Scaling blockchain execution by turning ordering curse to
  a performance blessing.
\newblock In {\em Proceedings of the 28th ACM SIGPLAN Annual Symposium on
  Principles and Practice of Parallel Programming}, PPoPP '23, pages 232--244,
  New York, NY, USA, 2023. Association for Computing Machinery.
\newblock \href {https://doi.org/10.1145/3572848.3577524}
  {\path{doi:10.1145/3572848.3577524}}.

\bibitem{grech2019gigahorse}
Neville Grech, Lexi Brent, Bernhard Scholz, and Yannis Smaragdakis.
\newblock Gigahorse: thorough, declarative decompilation of smart contracts.
\newblock In {\em 2019 IEEE/ACM 41st International Conference on Software
  Engineering (ICSE)}, ICSE '19, pages 1176--1186. IEEE, 2019.
\newblock \href {https://doi.org/10.1109/ICSE.2019.00120}
  {\path{doi:10.1109/ICSE.2019.00120}}.

\bibitem{tm-book}
Rachid Guerraoui and Michal Kapalka.
\newblock {\em Principles of Transactional Memory}.
\newblock Synthesis Lectures on Distributed Computing Theory. Morgan \&
  Claypool, 2010.

\bibitem{KR11}
Petr Kuznetsov and Srivatsan Ravi.
\newblock On the cost of concurrency in transactional memory.
\newblock In Antonio~Fern{\`a}ndez Anta, Giuseppe Lipari, and Matthieu Roy,
  editors, {\em International Conference on Principles of Distributed Systems
  (OPODIS)}, volume 7109 of {\em Lecture Notes in Computer Science}, pages
  112--127, Berlin, Heidelberg, 2011. Springer.
\newblock \href {https://doi.org/10.1007/978-3-642-25873-2_9}
  {\path{doi:10.1007/978-3-642-25873-2_9}}.

\bibitem{HeliusSolanaLFM2024}
Helius Labs.
\newblock Solana local fee markets: How they work and why they matter, 2024.
\newblock [Blog, Accessed: May 2025].
\newblock URL: \url{https://www.helius.dev/blog/solana-local-fee-markets}.

\bibitem{polygonPosParallelization}
Polygon Labs.
\newblock Innovating the main chain: A polygon pos study in parallelization,
  2024.
\newblock [Blog, Accessed: May 2025].
\newblock URL:
  \url{https://polygon.technology/blog/innovating-the-main-chain-a-polygon-pos-study-in-parallelization}.

\bibitem{nftMintingCongestion}
What is lazy minting? introducing a smarter yet economical way to mint nfts,
  January 2024.
\newblock [Blog, Accessed: January 2025].
\newblock URL:
  \url{https://www.antiersolutions.com/what-is-lazy-minting-introducing-a-smarter-yet-economical-way-to-mint-nfts/}.

\bibitem{Lyn96}
Nancy~A. Lynch.
\newblock {\em Distributed Algorithms}.
\newblock Morgan Kaufmann, 1996.

\bibitem{MythX}
Mythx vulnerability coverage, July 2025.
\newblock [Website, Accessed: July 2025].
\newblock URL: \url{https://mythx.io/detectors/}.

\bibitem{wikipediaPareto}
Pareto distribution, 2023.
\newblock [Wikipedia, Accessed: January 2025].
\newblock URL: \url{https://en.wikipedia.org/wiki/Pareto_distribution}.

\bibitem{piduguralla2023dag}
Manaswini Piduguralla, Saheli Chakraborty, Parwat~Singh Anjana, and Sathya
  Peri.
\newblock Dag-based efficient parallel scheduler for blockchains: Hyperledger
  sawtooth as a case study.
\newblock In {\em European Conference on Parallel Processing}, pages 184--198,
  Berlin, Heidelberg, 2023. Springer, Springer-Verlag.
\newblock URL: \url{10.1007/978-3-031-39698-4_13}, \href
  {https://doi.org/10.1007/978-3-031-39698-4_13}
  {\path{doi:10.1007/978-3-031-39698-4_13}}.

\bibitem{riselabs}
{RISE Chain}.
\newblock {RISE Chain Webpage}, 2024.
\newblock [Webpage, Accessed: January 2025].
\newblock URL: \url{https://www.riselabs.xyz/}.

\bibitem{pevm}
{RISE Labs}.
\newblock {PEVM: Parallel Ethereum Virtual Machine}, 2023.
\newblock [Github, Accessed: January 2025].
\newblock URL: \url{https://github.com/risechain/pevm}.

\bibitem{VikramHerlihy:EmpSdy-Con:Tokenomics:2019}
Vikram Saraph and Maurice Herlihy.
\newblock An empirical study of speculative concurrency in ethereum smart
  contracts.
\newblock In {\em International Conference on Blockchain Economics, Security
  and Protocols (Tokenomics 2019)}, pages 4:1--4:15, Dagstuhl, Germany, 2019.
  OpenAccess Series in Informatics (OASIcs), Schloss Dagstuhl--Leibniz-Zentrum
  fuer Informatik.
\newblock \href {https://doi.org/10.4230/OASIcs.Tokenomics.2019.4}
  {\path{doi:10.4230/OASIcs.Tokenomics.2019.4}}.

\bibitem{sawtooth}
Architecture documentaion v1.2.
\newblock [Whitepaper, Accessed: January 2025].
\newblock URL:
  \url{{https://sawtooth.hyperledger.org/docs/1.2/architecture/transaction_scheduling.html}}.

\bibitem{seigiga}
{Sei Labs}.
\newblock {Sei Giga: Scaling the EVM through Parallel Execution}, 2024.
\newblock [Whitepaper, Accessed: May 2025].
\newblock URL:
  \url{{https://github.com/sei-protocol/sei-chain/blob/main/whitepaper/Sei_Giga.pdf}}.

\bibitem{STM95}
Nir Shavit and Dan Touitou.
\newblock Software transactional memory.
\newblock In {\em Proceedings of the Fourteenth Annual ACM Symposium on
  Principles of Distributed Computing}, PODC '95, page 204–213, New York, NY,
  USA, 1995. Association for Computing Machinery.
\newblock \href {https://doi.org/10.1145/224964.224987}
  {\path{doi:10.1145/224964.224987}}.

\bibitem{solana-url}
Solana documentation.
\newblock [Docs, Accessed: January 2025].
\newblock URL: \url{https://docs.solana.com/}.

\bibitem{solanaFeeMarketplace}
Solana fees, part 1, December 2023.
\newblock [Blog, Accessed: May 2025].
\newblock URL:
  \url{{https://www.umbraresearch.xyz/writings/solana-fees-part-1}}.

\bibitem{SolanaLFM2024}
SolanaFloor.
\newblock Solana’s local fee market: A solution to soaring gas prices, 2024.
\newblock [Blog, Accessed: May 2025].
\newblock URL:
  \url{https://solanafloor.com/news/solanas-local-fee-market-a-solution-to-soaring-gas-prices}.

\bibitem{suiDoc}
Sui documentation: Discover the power of sui through examples, guides, and
  concepts.
\newblock [Docs, Accessed: January 2025].
\newblock URL: \url{https://docs.sui.io}.

\bibitem{umbraresearch}
Umbraresearch.
\newblock Lifecycle of a solana transaction.
\newblock [Blog, Accessed: January 2025].
\newblock URL:
  \url{https://www.umbraresearch.xyz/writings/lifecycle-of-a-solana-transaction}.

\bibitem{solanaSealevel}
Anatoly Yakovenko.
\newblock Sealevel - parallel processing thousands of smart contracts,
  September 2019.
\newblock [Blog, Accessed: January 2025].
\newblock URL:
  \url{https://medium.com/solana-labs/sealevel-parallel-processing-thousands-of-smart-contracts-d814b378192}.

\end{thebibliography}
    \clearpage
\appendix
\noindent
\textbf{\Large Appendix}

\vspace{.2cm}
\noindent
This section is organized as follows:
\begin{table}[ht]
\vspace{-.24cm}
\label{tab:ap-Org}
\resizebox{.65\textwidth}{!}{%
    \begin{tabular}{l  l }
        

        
        \Cref{sec:conf_spec_evm}:
        & 
        \begin{tabular}
            [c]{@{}l@{}}
            Conflict Specifications for EVM
        \end{tabular} 
        \\
        \Cref{subsec:evm_conf_analyzer}:
        & 
        \begin{tabular}
            [c]{@{}l@{}}
            The EVM Conflict Analyzer
        \end{tabular} 
        \\
        \Cref{subsec:evm_conf_analyzer_exp}:
        & 
        \begin{tabular}
            [c]{@{}l@{}}
            Microbenchmarks for EVM Conflict Analyzer
        \end{tabular} 
        \\

    \end{tabular}%
    }
\end{table}



\section{Conflict Specifications for EVM}\label{sec:conf_spec_evm}
We previously discussed a limited implementation of the conflict analyzer for EVM \BTM{}, in the main draft \Cref{sec:spec}, which captured only a subset of transaction types. In this section, we present another implementation that provides a more comprehensive and robust form of conflict analysis. This stronger version generalizes to all transaction types and enables more accurate detection of execution dependencies across diverse workloads.
We refer to this as the \emph{strong conflict analyzer} which derives access specifications for public entry functions statically at the time of deployment of smart contracts using data-flow analysis on the smart contract code. This is a one-time computation (done possibly during blockchain \emph{epoch} changes or periodically), enabling efficient computation of conflict specifications for a block during transaction execution.

\subsection{{The Strong Conflict Analyzer for EVM}}\label{subsec:evm_conf_analyzer}
\begin{algorithm}[!htb]
\scriptsize
\caption{Derives the conflict specifications given as input to \Cref{algo:sdag,algo:saBlockSTM}; \textcolor{blue}{Strong Mode-lines~\ref{alg:alg1l1_apn}-\ref{alg:alg1l31_apn}} and Weak Mode-lines~\ref{alg:alg2l34_apn}-\ref{alg:alg2l66_apn}).}
\label{spec:algo_apn}
\KwIn {A transaction $T_j$ in a \preset{} order $T_1\to,\ldots,\to T_n$.}
\KwOut {$S \subseteq \{T_i | i<j,\Rset(T_j) \cap \Wset(T_i) = \emptyset\}$.}
\KwData{
$labels$: A map initialized to empty and filled during the preprocessing.
}
\SetKw{Continue}{continue}
\begin{minipage}[t]{0.465\textwidth}
\LinesNotNumbered
\SetNlSty{}{}{}
\begingroup\color{blue}
\vspace{-6pt}
\SetKwFunction{callGraph}{callGraph}
\SetKwProg{Fn}{Fun}{:}{}
\Fn{\callGraph($c^{TAC}$)}{\label{alg:alg1l1_apn}
\nlset{2} $V \gets$ functions in $c^{TAC}$
    \label{spec:algo:callgraph_apn}
    
    \nlset{3} $E \gets \emptyset$
    
    \nlset{4} 
    \ForAll{$v \in V$}{
    \nlset{5}
        \ForAll{$w \in V$ where $v$ code has a call to $w$}{
           \nlset{6} $E \gets E \cup \{w\}$
        }
    }
    
    \nlset{7} \Return $G = (V, E)$
}

\SetKwFunction{canReach}{canReach}
\SetKwProg{Fn}{Fun}{:}{}
\Fn{\canReach($G,v,opcodes$)}{
\nlset{8}
    \ForAll{$w \in V$ where $w$ can reach $v$ in $G$}{
        \nlset{9}\If{$w$ contains an opcode in $opcodes$}{
        \nlset{10} \Return true
        }
    }
    \nlset{11} \Return false.
}

\SetKwFunction{prep}{preprocess}
\SetKwProg{Fn}{Fun}{:}{}
\nlset{12}
\label{spec:algo:prep_apn}
\Fn{\prep}{
   \nlset{13} $exitOpcodes \gets$ \{CALL, DELEGATECALL, SELFDESTRUCT, CREATE,CREATE2\}
    
    \nlset{14}$staticExitOpcodes\gets$\{STATICCALL, \\BALANCE\}
    
    \nlset{15} \ForAll{contracts $c$}{
    \nlset{16} $c^{TAC} \gets$ $c$ proper representation
    
    \nlset{17} $G = (V, E) \gets \callGraph(c^{TAC})$

        \nlset{18} \ForAll{$v \in V$}{
            \nlset{19} $sig \gets v.\texttt{sig}$

            \nlset{20} \label{spec:algo:tokenlabel_apn}
            \uIf{$sig \in \{transfer, transferFrom, approve\}$ in \ERC{}}{ 
                \nlset{21} $labels[c,sig] \gets sig$.
            }
            \nlset{22}\uElseIf{\canReach($G, v, exitOpcodes$)}{
                \nlset{23} $labels[c, sig] \gets exitsContract$
            }
            \nlset{23}\uElseIf{\canReach($G, v, staticExitOpcodes$)}{
                \nlset{24} $labels[c, sig] \gets staticExitsContract$
            }
            \nlset{25}\Else{
                \nlset{26} $labels[c, sig] \gets insideContract$
            }
        }
    }

\SetKwFunction{getLabel}{getLabel}
\SetKwProg{Fn}{Fun}{:}{}
\nlset{27}
\Fn{\getLabel($T$)}{
    \label{spec:algo:label_apn}
        \nlset{28}
        \If{$T_i.dest \in A_U$}{
            \nlset{29} \Return $SimplePayment$
        }
        \nlset{30} \Else{
        \nlset{31} \Return $labels[T_j.dest, T.sig]$ \label{alg:alg1l31_apn}
        }
    }
}
\endgroup
\SetKwFunction{isToken}{is\ERC{}}
\SetKwProg{Fn}{Fun}{:}{}
\nlset{32}\Fn{\isToken($label$)}{\label{alg:alg2l32_apn}
    \nlset{33} \Return $label\in\{transfer,transferFrom,approve\}$
}
\end{minipage}
\hfill
\begin{minipage}[t]{0.419\textwidth}
\LinesNotNumbered
\SetNlSty{}{}{}
\SetKwFunction{ercAccounts}{erc20Accounts}
\SetKwProg{Fn}{Fun}{:}{}
\nlset{34}
\Fn{\ercAccounts($T$)}{\label{alg:alg2l34_apn}
\tcp{Get the affected accounts in an \ERC{} transaction; for example, a \emph{transferFrom()} includes the addresses \emph{from}, \emph{to}, and the transaction sender. As long as two \ERC{} transfers have disjoint sets of affected accounts, they are independent.}
    \nlset{35} \uIf{$label \in \{transfer,approve\}$}{
    \nlset{36} \Return $\{T.\texttt{origin}, T.\texttt{calldata}.\texttt{to}\}$
    }
    \nlset{37} \uElseIf{$label = transferFrom$}{
    \nlset{38} \Return $\{T.\texttt{origin}, T.\texttt{calldata}.\texttt{from}, T.\texttt{calldata}.\texttt{to}\}$
    }
}

\SetKwFunction{weakLabel}{getWeakLabel}
\SetKwProg{Fn}{Fun}{:}{}
\nlset{39}
\Fn{\weakLabel($T$)}{
    \nlset{40} \If{$T.dest \in A_U$}{
    \nlset{41} \Return $SimplePayment$.
    }
    \nlset{42} $sig \gets T.\texttt{sig}$

    \nlset{43} 
    \If{$sig \in \{transfer,transferFrom,approve\}$ in \ERC{}}{ 
        \nlset{44} \Return $T.\texttt{sig}$
    }   
    \nlset{45} \Else{
        \nlset{46} \Return $exitsContract$
    }
}

\SetKwFunction{rfconset}{findCSet}
\SetKwProg{Fn}{Fun}{:}{}
\nlset{47}
\Fn{\rfconset{$T_j$}}{
    \nlset{48} $label_j \gets \getLabel(T_j)$
    
    \nlset{49} \If{$label_j = exitsContract$ or $label_j = staticExitsContract$}{
        \nlset{50} \Return $\emptyset$
    }
    
    \nlset{51} $Output \gets \emptyset$
    
    \nlset{52} \ForAll{$1 \leq i < j$}{
        \nlset{56} $label_i \gets \getLabel(T_i)$
        }
        
        \nlset{57} \uIf{$T_i.\texttt{origin} = T_j.\texttt{origin}$ or $label_i = exitsContract$}{
        \nlset{58} \Continue
        }
        \nlset{59} 
        \label{spec:algo:token_apn}
        \uIf{$T_i.dest = T_j.dest$ and \isToken($label_i$ and \isToken($label_j$))}{
            \nlset{60} $affectedAccounts_i$ $\gets$ \ercAccounts($T_i$)
            
            \nlset{61} $affectedAccounts_j$ $\gets$ \ercAccounts($T_j$)

            \nlset{62} \uIf{$affectedAccounts_i \cap affectedAccounts_j = \emptyset$}{
                \nlset{63} $Output \gets Output \cup \{T_i\}$
            }
        }
        \nlset{64} 
        \label{spec:algo:distinct}
        \uElseIf{$|\{T_i.\texttt{origin}, T_j.\texttt{origin}, T_i.\texttt{dest}, T_j.\texttt{dest}\}| = 4$}{ 
            \nlset{65} $Output \gets Output \cup \{T_i\}$
        }
    }
    
    \nlset{66} \Return $Output$\label{alg:alg2l66_apn}

\end{minipage}
\end{algorithm}

\Cref{spec:algo_apn} underapproximates a conflict independence set given a sequence of transactions with a preset order. \Cref{algo:sdag} uses such a set to construct the transaction dependency graph. \Cref{algo:saBlockSTM} also uses this set to find the transactions independent from all the previous ones to execute concurrently. Of course, the set is an underapproximation of the actual conflict indepndence between the transactions as it tries to derive these relations \textit{statically}, that is before the actual execution and also independent of the current state of the blockchain.

\noindent\textbf{The Strong Mode Preprocessing.}
In the strong mode, before the actual conflict set derivation occurs, we assume that a preprocessing step has been done on the whole Ethereum ecosystem (in a production level implementation, such preprocessing should occur periodically for the newly deployed contracts). The first step in such a preprocess is some representation of the code enabling us to derive a function call graph of the contract. We will expand further on this in discussing the implementation. 

Then, the call graph is derived based on $c_{TAC}$ as in line~\ref{spec:algo:callgraph_apn}. During the preprocessing (line~\ref{spec:algo:prep_apn}), chosen special \ERC{} functions (transfer, transferFrom, approve) will be labeled according to the functions. For all other ones, we have assigned special opcodes $exitOpcodes$ and $staticExitOpcodes$ denoting respectively whether a contract can write to or read from other contracts' data. We label each function of a contract depending on whether it can reach those opcodes or not ($exitsContract$ for when the function can write to other accounts, $staticExitsContract$ for when it can only read from other accounts and $insideContract$ otherwise).

\noindent\textbf{Strong Mode Example.}
Now, we go over a run of the strong mode algorithm for transactions interacting with contracts Token and Wallet (cf. \cref{lst:cap0}) here which are very simple versions of actual token or wallet contracts. First notice that the function addToWallet contains none of the special opcodes ($CALL,STATICCALL$,...) and therefore should be labeled as $insideContract$. The transfer, however, has the signature of an \ERC{} transfer and thus is labeled as $transfer$. The withdraw function, however, contains an Ether transfer which at low level is translated to a $CALL$ and should be labeled as $exitsContract$. Similarly function $Token.turnEtherToToken$ is labeled as $staticExitsContract$.
\begin{itemize}
    \item $Wallet.withdraw$ is labeled as $exitsContract$ and thus is overapproximated to have read-from conflict with \textit{any} other transaction (but this is not guaranteed to be an actual read-from conflict as we are calculating an overapproximation).
    
    \item If $u_1$ initiates a transaction to call $Token.transfer$ and $u_2 \neq u_1$ initiates a transaction to call $Wallet.addToWallet$, these two have no read-from conflict. This function addToWallet as its label guarantees they will access nothing outside the accounts of the transaction origin and the contract. For the transfer, we just assume that the functions with \ERC{}-like functions faithfully implement the contract and thus will not deal with any other contract (except perhaps the proxy contract for the main \ERC{} contract).
    
    \item Similar to the previous part, $u_1$ initiates a transaction $T_1$ to call $Token.turnEtherToToken$ and $u_2$ initiates a transaction $T_2$ to call $Wallet.addToWallet$. Here, if $T_1$ is placed before $T_2$ in the preset order, they are overapproximated to have read-from conflict. But if $T_2$ comes first in the order, they are guaranteed not have read-from conflict. This is because $T_2$ only reads/modifies $u_2,Wallet$ and $T_1$ only modifies $u_1,Token$.
\end{itemize}
On the other hand, for any two different accounts $u_1,u_2$ calling transfer and addToWallet functions, there is no read-from conflict. This is evident as both of these functions do not exit their respective contract.

Note that we also make some assumptions about the decompiler in the function in line~\ref{spec:algo:label_apn}. This is used in the process exactly to ask one question: If there is no possible path from the function entry point to a certain opcode, then that opcode can never be reached through the execution of the bytecode with the same entry point. Thus, the following assumption about the decompiler is necessary.

\begin{proposition} \label{gigahorse:assumption}
    Let a contract $C$ and four bytes signature $s$ of a function $f$ be given. Refer to the proper representation of $c$ for callgraph as $c_{TAC}$. If there is no possible path from $f$ to an opcode $O$ in $c_{TAC}$, no transaction calling $c$ with signature $s$ can reach an opcode $O$.   
\end{proposition}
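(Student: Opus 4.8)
The plan is to read \Cref{gigahorse:assumption} as a \emph{soundness} (over-approximation) statement about the static representation $c_{TAC}$ and to prove it by exhibiting a simulation between concrete bytecode executions and paths in the control-flow and call graph that the decompiler produces. Working with the contrapositive, I would assume that some concrete execution of a transaction calling $c$ with selector $s$ actually executes the opcode $O$, and then construct a static path from $f$ to $O$ in $c_{TAC}$, contradicting the hypothesis that no such path exists.

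First I would fix the operational setting: a concrete execution is a finite sequence of machine configurations $\gamma_0 \to \gamma_1 \to \cdots \to \gamma_m$, where each $\gamma_k$ carries a program counter pointing at a concrete bytecode instruction $\iota_k$, the dispatcher routes selector $s$ to the entry of $f$ so that $\iota_0$ is that entry, and $\iota_m$ is the executed occurrence of $O$ (it suffices to truncate the trace at the first such occurrence). I would then define a decoding map sending each concrete instruction occurrence to the TAC statement, and hence to the function node of $c_{TAC}$, from which it was reconstructed, so that reachability questions about the concrete trace become reachability questions in the static graph.

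The core of the argument is an invariant proved by induction on $k$: for every $k$ there is a path in $c_{TAC}$ from the entry of $f$ to the static node decoding $\iota_k$. The base case is immediate, since $\iota_0$ decodes to the entry of $f$. For the inductive step I would case-split on the kind of transition $\gamma_k \to \gamma_{k+1}$ — a fall-through, an intraprocedural conditional or unconditional jump ($\mathtt{JUMP}/\mathtt{JUMPI}$), or an internal call together with its matching return — and in each case appeal to the decompiler's guarantee that the corresponding intraprocedural control-flow edge, respectively call-graph edge, is present in $c_{TAC}$; this extends the path from $\iota_k$ to $\iota_{k+1}$. Instantiating the invariant at $k = m$ yields a path from $f$ to the node of $O$, the desired contradiction.

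I expect the inductive step for jumps to be the main obstacle, and indeed the only genuinely non-routine ingredient. In EVM bytecode jump targets are values computed on the stack, so the claim that every concrete successor instruction is a static control-flow successor is exactly the soundness of dynamic-jump-target recovery, which is hard in general and is precisely what the underlying decompiler (the three-address-code reconstruction, e.g.\ Gigahorse) is assumed to deliver soundly, over-approximating the set of possible targets rather than missing any. Consequently the honest structure of the proof is a \emph{reduction}: I would isolate this per-step matching as a single lemma stating the faithfulness contract of $c_{TAC}$, discharge the remainder (base case, the fall-through and internal call/return cases, and the truncation and dispatch bookkeeping) by routine induction, and flag explicitly that the lemma itself rests on the decompiler's guarantee rather than being established from the raw bytecode semantics.
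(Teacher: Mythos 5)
You should first be aware that the paper contains no proof of this statement at all: immediately before stating it, the authors write that ``the following assumption about the decompiler is necessary,'' and the proof of the subsequent theorem simply opens with ``we assume as \cref{gigahorse:assumption} about $c^{TAC}$.'' The proposition is the trusted soundness contract of the Gigahorse decompilation, not a result derived from bytecode semantics. Your proposal is therefore consistent with the paper's stance and, in fact, sharper than it: by setting up the simulation invariant (every prefix of the concrete trace decodes to a static path from the entry of $f$) and discharging the base case, fall-throughs, and internal call/return transitions by routine induction, you isolate exactly where trust in the decompiler is irreducible --- the soundness of dynamic jump-target recovery for \texttt{JUMP}/\texttt{JUMPI} --- whereas the paper assumes the end-to-end reachability statement wholesale. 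Flagging that lemma as assumed rather than proved is the honest and correct move; a proof from first principles is impossible here, since jump targets are stack-computed values and over-approximating them is precisely the decompiler's job.

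There is, however, one concrete gap in your case analysis: your transitions cover only intra-contract control flow, so the induction as written breaks if the execution leaves $c$ via an external call (\texttt{CALL}, \texttt{DELEGATECALL}) and re-enters $c$ at a \emph{different} entry point, where the occurrence of $O$ could lie in a re-entered frame with no static path from $f$ --- and the proposition as literally stated quantifies over the whole transaction, not the initial frame. The repair is a closure argument tied to how the proposition is actually used (\cref{spec:algo:access}): the monitored opcode sets include the exit opcodes themselves, so any execution that escapes the frame entered at $f$ must first execute a \texttt{CALL}-family opcode \emph{within} that frame, and your invariant already yields a static path from $f$ to that opcode. Hence if no exit opcode is statically reachable from $f$, the execution never leaves the initial frame, and your frame-local induction then applies to the entire trace. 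You should either restrict the claim to the frame entered at $f$ or state this closure step explicitly; as written, the inductive step is incomplete for exactly the opcodes ($O \notin$ the exit sets) where reentrancy could interpose.
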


\begin{proposition} \label{spec:algo:access}
    Let $T$ be a transaction where $T.dest \in A_C$. Then if $T.calldata$ specifies an existing function signature in the contract,
    \begin{itemize}
        \item \label{case:exit}
        $T$ can only write to accounts other than $T.origin, T.dest$ if it is not a simple payment and some opcode in $exitOpcodes$ is reachable from the entry point.
        \item \label{case:staticexit}
        $T$ can only read from accounts other than $T.origin, T.dest$ if it is not a simple payment and some opcode in $exitOpcodes \cup staticExitOpcodes$ is reachable from the entrypoint.
    \end{itemize}
    where \(exitOpCodes = \{CALL,SELFDESTRUCT,CREATE,CREATE2\},\newline  staticExitOpcodes = \{STATICCALL, BALANACE\}\).
\end{proposition}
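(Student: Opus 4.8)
The plan is to reduce \Cref{spec:algo:access} to two ingredients: a low-level characterization of which EVM opcodes can touch the persistent data $a.\texttt{data}$ of an account $a$ outside $\{T.\texttt{origin}, T.\texttt{dest}\}$, and the static reachability guarantee of \Cref{gigahorse:assumption}. First I would fix the meaning of ``touch'': since $a.\texttt{data}$ consists only of the balance field and (for contracts) the storage fields, code-inspecting opcodes such as \texttt{EXTCODESIZE}, \texttt{EXTCODECOPY}, and \texttt{EXTCODEHASH} are out of scope and need not appear in either opcode set. With this convention I would prove an \emph{EVM-semantics lemma}: in the top-level call frame of $T$, the executing contract is $T.\texttt{dest}$, so \texttt{SSTORE} writes only to $T.\texttt{dest}$'s storage and the implicit value transfer of $T.\texttt{value}$ moves balance only between $T.\texttt{origin}$ and $T.\texttt{dest}$. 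Consequently the \emph{only} opcodes whose execution can modify $a.\texttt{data}$ for some external $a$ are \texttt{CALL} (via a value-bearing transfer or a nested frame whose \texttt{SSTORE} now targets the callee), \texttt{SELFDESTRUCT} (crediting a beneficiary), and \texttt{CREATE}/\texttt{CREATE2} (endowing and initializing a freshly created account); and the only opcodes that can read external $a.\texttt{data}$ are these together with \texttt{BALANCE} and \texttt{STATICCALL}.

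With the lemma in hand, the proposition follows by contraposition. Suppose $T$ writes to some account $a \notin \{T.\texttt{origin}, T.\texttt{dest}\}$. Since $T.\texttt{dest} \in A_C$ and $T.\texttt{calldata}$ names an existing signature, $T$ is not a simple payment, giving the first clause. By the semantics lemma the concrete execution trace of $T$ must have run some opcode $O \in exitOpcodes$. Every opcode that executes at runtime corresponds to a node reachable from the function entry point in the decompiled representation $c^{TAC}$ of $T.\texttt{dest}$: if $O$ were \emph{not} reachable from the entry point of the function selected by $T.\texttt{sig}$, then \Cref{gigahorse:assumption} would forbid any transaction calling that signature from reaching $O$, contradicting the fact that $T$ executed it. Hence some opcode of $exitOpcodes$ is reachable, which is the second clause. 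The read case is identical, replacing $exitOpcodes$ with $exitOpcodes \cup staticExitOpcodes$ and invoking the read half of the semantics lemma.

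I expect the main obstacle to be making the EVM-semantics lemma both exhaustive and honest about call-frame context, rather than any subtlety in the reductive argument. Three points demand care. Frame re-entry: an \texttt{SSTORE} is innocuous in the top frame but, once reached through a \texttt{CALL}, writes to the callee's storage---this is exactly why \texttt{CALL} must sit in the write set even though \texttt{SSTORE} itself does not. The treatment of \texttt{DELEGATECALL}/\texttt{CALLCODE}: these run foreign code but against $T.\texttt{dest}$'s own storage, so they do not by themselves create an external write and may legitimately be omitted from the proposition's $exitOpcodes$, though I would need to confirm they cannot smuggle in a value transfer to a third account. Finally, the read/write asymmetry: the read set must contain the writers too, since \texttt{CALL} and its relatives also observe external balances and storage, whereas \texttt{BALANCE} and \texttt{STATICCALL} contribute only reads. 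Once these cases are discharged, the contraposition through \Cref{gigahorse:assumption} is routine, and the proposition plugs directly into the soundness chain that underpins \Cref{th:cfweak}.
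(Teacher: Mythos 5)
The first thing to note is that the paper itself never proves \Cref{spec:algo:access}: it is stated without proof, alongside \Cref{case:wellformed} and \Cref{gigahorse:assumption}, as a domain assumption about EVM semantics, and is consumed as a premise in the proof of the soundness theorem for \Cref{spec:algo_apn}. So your two-step architecture --- an exhaustive opcode-level semantics lemma, then contraposition through \Cref{gigahorse:assumption} to convert ``executed in the dynamic trace'' into ``reachable in $c^{TAC}$'' --- is the right (and essentially the only) way to actually discharge the statement, and it supplies more than the paper does. Your treatment of \texttt{CALL} is also exactly right: an external write performed by a callee's \texttt{SSTORE} is invisible to the static analysis of $T.\texttt{dest}$'s bytecode, so it must be attributed to the \texttt{CALL} opcode sitting in $T.\texttt{dest}$'s own code, which is what makes the reachability bridge work.

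The genuine gap is your resolution of the \texttt{DELEGATECALL} question, and it resolves against you. A \texttt{DELEGATECALL} frame is not static and executes \emph{foreign} bytecode with $\texttt{address(this)} = T.\texttt{dest}$; that foreign code can execute a value-bearing \texttt{CALL} drawing on $T.\texttt{dest}$'s balance and crediting an arbitrary third account, or a \texttt{SELFDESTRUCT} naming an arbitrary beneficiary (and, for the read clause, \texttt{BALANCE} or \texttt{STATICCALL}). The witnessing opcodes live in the delegate's bytecode, not in $c^{TAC}$ of $T.\texttt{dest}$, so your bridge ``every executed opcode corresponds to a reachable node'' fails for them; the only witness in $T.\texttt{dest}$'s own code is the \texttt{DELEGATECALL} itself. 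Consequently your semantics lemma is false as stated, and the proposition with $exitOpcodes = \{\texttt{CALL},\texttt{SELFDESTRUCT},\texttt{CREATE},\texttt{CREATE2}\}$ is not provable. Note that the preprocessing in \Cref{spec:algo_apn} \emph{does} include \texttt{DELEGATECALL} in $exitOpcodes$, so the set displayed in the proposition is best read as an omission in the paper; the statement you should prove --- and can, by your own argument, once \texttt{DELEGATECALL} (and, for completeness, \texttt{CALLCODE}) is restored to $exitOpcodes$ --- is the one the algorithm actually uses. A second, smaller omission in your lemma: every Ethereum transaction credits gas fees to the coinbase account, an unconditional write outside $\{T.\texttt{origin},T.\texttt{dest}\}$; the paper neutralizes this in its implementation by deferring fee crediting until after block execution, and your lemma should either adopt that convention explicitly or exclude coinbase from the account model. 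With these corrections your contraposition goes through unchanged and plugs into the soundness chain behind \Cref{th:cfweak} as intended.
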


\begin{theorem}
    If the \Cref{spec:algo_apn} outputs an independence set $\mathcal{T}$ for a transaction $T_j$, then for each $T \in \mathcal{T}$ in the \preset{} order before $T_j$, $\Rset(T_j) \cap \Wset(T) = \emptyset$.
\end{theorem}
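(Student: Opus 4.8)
The plan is to mirror the proof of \Cref{th:cfweak} for the weak analyzer, but to replace its coarse ``only \ERC{} transfers and simple payments are confined'' reasoning with the finer static labelling produced by the preprocessing pass, appealing to \cref{gigahorse:assumption,spec:algo:access,case:wellformed} to turn reachability facts about the decompiled code $c^{TAC}$ into guarantees about which accounts a transaction may touch. First I would dispose of the trivial case: if \texttt{getLabel}$(T_j) \in \{exitsContract, staticExitsContract\}$, then the early return makes $\mathcal{T} = \emptyset$ and the claim holds vacuously. Hence I may assume $label_j \in \{SimplePayment, transfer, transferFrom, approve, insideContract\}$.

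The second step is to confine the read set of $T_j$. If $T_j$ is a simple payment or a designated \ERC{} call, the first bullet of \cref{case:wellformed} gives that $T_j$ reads only from the data fields of $T_j.\texttt{origin}$ and $T_j.\texttt{dest}$ (and, in the \ERC{} case, only the mapping entries of $T_j.\texttt{dest}$ keyed by its affected accounts). If instead $label_j = insideContract$, then by construction the preprocessing found no opcode of $exitOpcodes \cup staticExitOpcodes$ reachable from the entry point of the invoked function of $T_j.\texttt{dest}$; by \cref{gigahorse:assumption} no such opcode is ever executed, so the second bullet of \cref{spec:algo:access} again yields $\Rset(T_j) \subseteq \{T_j.\texttt{origin}, T_j.\texttt{dest}\}$. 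In every surviving case, therefore, $T_j$'s reads lie in the data of $\{T_j.\texttt{origin}, T_j.\texttt{dest}\}$.

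Now fix $T = T_i \in \mathcal{T}$ with $i < j$. Having passed the \texttt{continue} guard, we know $label_i \neq exitsContract$ and $T_i.\texttt{origin} \neq T_j.\texttt{origin}$, and $T_i$ was admitted through exactly one of the two branches. In the token branch (\cref{spec:algo:token_apn}) we have $T_i.\texttt{dest} = T_j.\texttt{dest}$, both labels are \ERC{} transfers, and $affectedAccounts_i \cap affectedAccounts_j = \emptyset$; the second bullet of \cref{case:wellformed} then says $T_i$ and $T_j$ touch disjoint parts of the same mappings of the shared contract, so $\Rset(T_j) \cap \Wset(T_i) = \emptyset$. In the distinctness branch (\cref{spec:algo:distinct}) the four accounts $T_i.\texttt{origin}, T_j.\texttt{origin}, T_i.\texttt{dest}, T_j.\texttt{dest}$ are pairwise distinct; here I confine the \emph{write} set of $T_i$: if $T_i$ is a simple payment or \ERC{} call invoke \cref{case:wellformed}, and if $label_i \in \{insideContract, staticExitsContract\}$ then no opcode of $exitOpcodes$ is reachable, so the first bullet of \cref{spec:algo:access} (via \cref{gigahorse:assumption}) gives $\Wset(T_i) \subseteq \{T_i.\texttt{origin}, T_i.\texttt{dest}\}$. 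Since this account set is disjoint from $\{T_j.\texttt{origin}, T_j.\texttt{dest}\}$, which contains the accounts read by $T_j$, we again conclude $\Rset(T_j) \cap \Wset(T_i) = \emptyset$.

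The delicate point, and the main thing to get right, is the asymmetric treatment of the $staticExitsContract$ label: such a transaction may read from foreign accounts but can never write to them, so it is sound to \emph{admit} it as a candidate $T_i$ (only its write set matters) yet it must be \emph{rejected} as the target $T_j$ (whose read set we cannot bound) --- which is exactly why the early return excludes it for $T_j$ while the \texttt{continue} guard does not for $T_i$. Beyond this bookkeeping the argument is a routine case split, and its real content is carried entirely by the three semantic propositions; so the substance of the obstacle is justifying those --- namely that the decompiler's reachability over $c^{TAC}$ soundly over-approximates runtime opcode execution (\cref{gigahorse:assumption}) and that \ERC{}-labelled contracts faithfully restrict their state effects to the keyed mapping entries (\cref{case:wellformed}). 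These assumptions, rather than the combinatorics, are where soundness genuinely rests.
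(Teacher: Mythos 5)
Your proof is correct and takes essentially the same route as the paper's: a case split along the branches of \Cref{spec:algo_apn} (early return, token branch, distinctness branch), with all semantic weight resting on \cref{gigahorse:assumption}, \cref{spec:algo:access}, and \cref{case:wellformed}. In fact you are more careful than the paper's own proof at two points. First, the paper's proof claims that any admitted candidate $T$ must be a simple payment, labeled $insideContract$, or an \ERC{} call, silently dropping $label_i = staticExitsContract$ --- a label the algorithm's \texttt{continue} guard does \emph{not} exclude; your explicit treatment of the asymmetry (such a $T_i$ is admissible because only its write set matters, and with no opcode of $exitOpcodes$ reachable the first bullet of \cref{spec:algo:access} confines $\Wset(T_i)$ to $\{T_i.\texttt{origin}, T_i.\texttt{dest}\}$, whereas $T_j$ with that label must be rejected by the early return since its read set cannot be bounded) is exactly what is needed to make the argument airtight. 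Second, you conclude disjointness in the direction the theorem actually asserts, $\Rset(T_j) \cap \Wset(T) = \emptyset$, whereas the paper's concluding sentence in the distinctness case argues about $T_j$'s writes against $T$'s reads; both directions follow symmetrically from the same confinement facts, but your formulation matches the claim as stated. No gaps.
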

\begin{proof}
Throughout the proof, we assume as \cref{gigahorse:assumption} about $c^{TAC}$. Let $T \in \mathcal{T}$ in the \preset{} order be before $T_j$. First note that by the algorithm, if $T$ label is in $\{staticExitsContract,exitsContract\}$, then it is overapproximated to have read-from conflict with every transaction before in the preset order. So we can assume $T$ is either a simple payment or labeled $insideContract$ or it is an \ERC{} transaction. Then (according to the definition of simple payment and \cref{spec:algo:access} and line~\ref{spec:algo:label_apn}) $T$ only modifies/reads from $\{T.origin,T.dest\}$ which is guaranteed to be disjoint from $\{T_j.\texttt{origin}, T_j.\texttt{dest}\}$ by line~\ref{spec:algo:distinct}. Therefore by line~\ref{spec:algo:label_apn} and Proposition~\ref{spec:algo:access}, $T_j$ only modifies $\{T_j.\texttt{origin},T_j.\texttt{dest}\}$ which is disjoint from the read set of $T$ implying that $T$ does not have a read-from conflict with $T_j$. 

If both $T,T_j$ are \ERC{} transactions interacting with the same contract ($T.\texttt{dest} = T_j.\texttt{dest}$) calling by Proposition ~\ref{case:wellformed} and line~\ref{spec:algo:token_apn}, they are guaranteed to manipulate the tokens and accesses of a disjoint set of accounts.
\end{proof}

\subsection{Microbenchmarks for EVM Conflict Analyzer}\label{subsec:evm_conf_analyzer_exp}
\begin{table}[!htb]
    \caption{Weak Mode Specification Derivation Algorithm, the numbers reported are each average for all blocks in the corresponding range.}
    \label{tab:weak-specs}
    \centering
    \resizebox{1\columnwidth}{!}{%
        \begin{tabular}{| c | c | c | c | c | c |}
        \hline
        
        \multirow{2}{*}{\textbf{~~Block Range}~} & 
        \multirow{2}{*}{\textbf{~Block Size~}}  & 
        \multirow{2}{*}{\textbf{~Time (ms)~}}   & 
        \multirow{2}{*}{\parbox[c]{2.55cm}{\centering \textbf{Independent}\\\textbf{Tuples Fraction}}} &
        \multirow{2}{*}{\parbox[c]{2.85cm}{\centering \textbf{Simple Payment}\\\textbf{Fraction}}} &
        \multirow{2}{*}{\parbox[c]{1.55cm}{\centering \textbf{\ERC{}}\\\textbf{Fraction}}}\\
        & & & & &
        \\\hline\hline
        
        20320000-20320499 & 
        175.6 & 
        0.53 & 
        0.39 &
        0.31 &
        0.31\\
        
        15537293-15537392 & 
        149.0 & 
        0.17 & 
        0.06 &
        0.17 &
        0.07\\
        
        15537394-15537493 & 
        115.7 & 
        0.13 & 
        0.11 &
        0.22 &
        0.09\\
        
        4605067-4605166 & 
        71.1 & 
        0.20 & 
        0.41 &
        0.52 &
        0.16\\
        
        4605168-4605267 & 
        78.6 & 
        0.23 & 
        0.39 &
        0.53 &
        0.12\\
        \hline
        \end{tabular}
    }
\end{table}
\begin{table}[!htb]
    \caption{Strong Mode Specification Derivation Algorithm}
    \label{tab:strong-specs}
    \centering
    \resizebox{1\columnwidth}{!}{%
        \begin{tabular}{| c | c | c | c | c | c | c | c |}
        \hline
        
        \multirow{2}{*}{\textbf{Block Range}} & 
        \multirow{2}{*}{\textbf{Block Size}}   & 
        \multirow{2}{*}{\parbox[c]{2.25cm}{\centering \textbf{Decompilation}\\\textbf{Time (s)}}} & 
        \multirow{2}{*}{\parbox[c]{1.75cm}{\centering \textbf{Preprocess Time} (ms)}} &
        \multirow{2}{*}{\parbox[c]{1.53cm}{\centering \textbf{Analysis}\\\textbf{Time} (ms)}} &
        \multirow{2}{*}{\parbox[c]{1.2cm}{\centering \textbf{Tuples}\\\textbf{Fraction}}} &
        \multirow{2}{*}{\parbox[c]{2.85cm}{\centering \textbf{Simple Payment}\\\textbf{Fraction}}} &
        \multirow{2}{*}{\parbox[c]{1.6cm}{\centering \textbf{\ERC{}}\\\textbf{Fraction}}}\\
        & & & & & & &
        \\\hline\hline
        
        20320000-20320500 & 
        175.6 
        &
        12.7
        &
        162.8
        &
        0.65
        &
        0.41
        &
        0.31
        &
        0.31\\

        15537293-15537392
        &
        149.0
        &
        16.1
        &
        275.2
        &
        0.52
        &
        0.25
        &
        0.17
        &
        0.07\\

        15537394-15537493
        &
        115.7
        &
        13.1
        &
        211.3
        &
        0.44
        &
        0.37
        &
        0.22
        &
        0.09\\

        4605067-4605166
        &
        71.1
        &
        1.8
        &
        30.4
        &
        0.22
        &
        0.44
        &
        0.52
        &
        0.16\\

        4605167-4605266
        &
        78.6
        &
        2.1
        &
        30.4
        &
        0.28
        &
        0.43
        &
        0.53
        &
        0.12
        \\\hline
        \end{tabular}
    }
\end{table}

\noindent\textbf{Strong Mode Preprocessing.} We construct the call graph by using the three address code of the contract bytecode. To do so, decompilation is done for each contract by the state of the art decompiler for EVM bytecode Gigahorse \cite{grech2019gigahorse}. The preprocessing time is dominated by decompilation with orders of magnitude. This is natural because Gigahorse needs to run heavy logic programming analyses for its task of decompilation leading to relatively slow time. Assuming that this is done once for the whole blockchain and after that only periodically for the newly deployed contracts, this should not cause a major problem given that contract deployments whether by other contracts or EOA's represent only a small fraction of all transactions.

The benchmarks used for evaluation are five time ranges of Ethereum blocks as reported in \Cref{tab:weak-specs,tab:strong-specs}. The evaluations is performed on an AMD Ryzen 9 with 16 cores. Each of the measures provided is an average for blocks all over the corresponding range.

\noindent\textbf{Comparison to Weak Mode.}
As expected, the weak method speed outperforms the strong one in all rows. One can see that the more significant time gaps between the two methods occurs in rows $2,3$. This is again due to the lower ratio of simple payments and \ERC{} transactions leading to more aggressive pruning for the weak mode. A less important factor leading to slower speed for strong version can be the step of accessing the precomputed labels. For our experiments, this is not a bottleneck, however, if the precomputation is saved for the whole ecosystem the access time of the data structure can possibly cause a noticeable slowdown.

Similarly, the fraction of tuples derived by strong method is higher which is consistent with the definitions of both methods. Again, the highest gap is in rows $2,3$ where the ratio of simple payments and \ERC{} transactions is lower since these are the only transactions the weak method can reason about.

\noindent\textbf{Strong Mode Preprocessing Evaluation.}
For each block range, the algorithm starts with no preprocessed information but accumulates the preprocessing through that range. Thus, the cost of the decompilation/precomputation can be higher in first steps (up to 1-2 minutes) and gradually decreases as a lot of contracts have already been preprocessed.

In \Cref{tab:conflict-analysis-complete}, we compare the time taken by the strong and weak analyzers to compute specifications across different phases. This comparison helps us assess how these specifications can be effectively utilized in our integrated design. We evaluate both the number of specifications generated and various block-level statistics, including the count of native ETH transfers, \ERC{} transactions. Additionally, we examine whether the cost of generating these specifications can be amortized against the execution time of our methods. The dependency count for the Integrated analyzer includes only the dependencies added to the \osaBlockSTM{} scheduler after removing the transitive dependencies.

\begin{table}[ht]
    \centering
    \caption{Analysis of recent, historical, and large Ethereum blocks showing conflict statistics.}
    \label{tab:conflict-analysis-complete}
    \footnotesize
    \resizebox{\textwidth}{!}{%
    \begin{tabular}{|c|c c|c c|c c|c c c c|}
        \hline
        \multirow{2}{*}{\multirow{2}{*}{\parbox[c]{1.65cm}{\textbf{Block Number}}}} &
        \multicolumn{2}{c|}{\textbf{Analysis Phase}} &
        \multicolumn{2}{c|}{\textbf{Strong Case}} &
        \multicolumn{2}{c|}{\textbf{Weak Case}} &
        \multicolumn{4}{c|}{\textbf{Block Stats}} \\
        \cline{2-11}
        & \multirow{3}{*}{\parbox[c]{1cm}{\centering\textbf{Decomp}\\{(s)}}} 
        & \multirow{3}{*}{\parbox[c]{1.45cm}{\centering\textbf{Pre-proc}\\{(ms)}}} 
        & \multirow{3}{*}{\parbox[c]{1cm}{\centering\textbf{Conflict}\\{\textbf{Gen}}}} 
        & \multirow{3}{*}{\parbox[c]{1cm}{\centering\textbf{Spec}\\{\textbf{Size}}}} 
        & \multirow{3}{*}{\parbox[c]{1cm}{\centering\textbf{Conflict}\\{\textbf{Gen}}}} 
        & \multirow{3}{*}{\parbox[c]{1cm}{\centering\textbf{Spec}\\{\textbf{Size}}}} 
        & \multirow{3}{*}{\parbox[c]{.7cm}{\centering\textbf{Block}\\{\textbf{Size}}}} 
        & \multirow{3}{*}{\parbox[c]{.7cm}{\centering\textbf{ETH}\\{\textbf{Txns}}}} 
        & \multirow{3}{*}{\parbox[c]{.7cm}{\centering\textbf{\ERC{}}}} 
        & \multirow{3}{*}{\parbox[c]{.7cm}{\centering\textbf{SC}\\{\textbf{Txns}}}} \\ 
        & & & & & & & & & &\\ 
        & & & & & & & & & &\\\hline\hline

        \multicolumn{11}{|c|}{\multirow{2}{*}\centering\textbf{Ethereum 2.0}}
        \\\hline
        & & & & & & & & & &\\
        \textbf{15537400}  & 120.63  & 89.98  & 113.02 µs & 4531 & 26.59 µs & 190   & 178  & 14  & 6  & 158 \\
        \textbf{15537401}  & 6.79    & 13.16  & 36.85 µs  & 1200 & 15.16 µs & 36    & 70   & 9   & 0  & 61  \\
        & & & & & & & & & &\\
        \hline

        \multicolumn{11}{|c|}{\multirow{2}{*}\centering\textbf{CryptoKitties}}\\\hline
        & & & & & & & & & &\\
        \textbf{4605100}  & 3.81    & 11.21  & 17.86 µs  & 135  & 19.47 µs & 135   & 53   & 11  & 6  & 36 \\
        \textbf{4605101}   & 5.63    & 7.61   & 17.62 µs  & 230  & 17.52 µs & 207   & 34   & 15  & 8  & 11 \\
        & & & & & & & & & &\\
        \hline

        \multicolumn{11}{|c|}{\multirow{2}{*}\centering\textbf{Large Blocks}}\\\hline
        & & & & & & & & & &\\
        \textbf{17873752}  & 73.35   & 40.81  & 2.53 ms   & 54155 & 106.79 µs  & 50706 & 1189 & 1129 & 21 & 39 \\
        
        
        
        \textbf{17873654}  & 11.31   & 49.81  & 66.25 µs  & 1930  & 30.09 µs & 1691  & 136  & 41   & 18 & 77 \\
        & & & & & & & & & &\\
        \hline
    \end{tabular}%
    }
\end{table}

\noindent\textbf{Key Takeaways.}
We have developed approaches to derive the conflict independence set with or without preprocesing.
The significant difference between the two approaches can be seen in blocks with low ratio of simple payments and \ERC{} transactions, where the weak method is much faster but also derives a much lower ratio of independent tuples. The preprocessing time is dominated by the decompilation step. Though this does not cause any problem since it is done once for the whole ecosystem and after that only periodically for the small fraction of contract creating transactions.

\end{document}